%% file: main.tex
\documentclass[aps,
pra, 
showpacs,
notfootinbib,
reprint,
10pt,
longbibliography,
superscriptaddress]{revtex4-2}
\pdfoutput=1

\usepackage[T1]{fontenc}    
\usepackage{lmodern} 
\usepackage{microtype}
\usepackage[english]{babel} 

\usepackage[table,dvipsnames]{xcolor}
\usepackage{hyperref}
\PassOptionsToPackage{linktocpage}{hyperref} 

\usepackage{asymptote}

\usepackage{amsmath,amsfonts,amssymb,amsthm,bbm,graphicx,enumerate,bm,times}
\usepackage{mathtools}
\usepackage{comment}
\usepackage{paralist, tabularx}

\hypersetup{
  hypertexnames=false,
  colorlinks   = true, 
  urlcolor     = magenta, 
  linkcolor    = blue, 
  citecolor   = green 
}
\usepackage{bookmark} 

\usepackage[
    capitalize
    ]{cleveref} 
\usepackage[printonlyused,withpage,nolist]{acronym} 

\usepackage{tikz}
\usepackage{graphicx}
\usepackage{bm}
\usetikzlibrary{shapes,positioning}
\usepackage{etoolbox}
\usepackage{todonotes}
\usepackage[linesnumbered,ruled,vlined]{algorithm2e}
\usepackage{setspace}

 \newtheorem{theorem}{Theorem}

 \newtheorem{lemma}[theorem]{Lemma}
 
 \newtheorem{definition}[theorem]{Definition}

\newcommand{\e}{\mathrm{e}}

\newcommand{\Tr}{\mathrm{Tr}} 

\newcommand{\poly}{\mathrm{poly}}

\newcommand{\vertiii}[1]{{\left\vert\kern-0.25ex\left\vert\kern-0.25ex\left\vert #1 
    \right\vert\kern-0.25ex\right\vert\kern-0.25ex\right\vert}}

\newcommand{\exact}{\text{\rm exact}}

\newcommand{\ZZ}{{\mathbb Z}}

\DeclarePairedDelimiterX\norm[1]\lVert\rVert{%
  \ifblank{#1}{\,\cdot\,}{#1}
}   

\DeclarePairedDelimiterX\Set[1]\{\}{%
  
  #1
}

\DeclarePairedDelimiterX\braket[2]{\langle}{\rangle}%
  {#1\kern0.15ex\delimsize\vert\kern0.15ex\mathopen{}#2}
\DeclarePairedDelimiterX\ketbra[2]{\vert}{\vert}%
  {#1\kern0.15ex\delimsize\rangle\delimsize\langle\kern0.15ex\mathopen{}#2}
\DeclarePairedDelimiterX\sandwich[3]{\langle}{\rangle}%
  {#1\,\delimsize\vert\kern0.15ex\mathopen{}#2\kern0.15ex\delimsize\vert\kern0.15ex\mathopen{}#3}

\renewcommand{\vec}[1]{\mathbf{#1}}

\newcommand{\xfeas}{\ensuremath\vec x_\mathrm{\rm feas}}%
\newcommand{\ful}{\ensuremath f_{\widehat{\mathrm{\rm unc}}}}%

\definecolor{ingo}{rgb}{.8,.5,0}
\definecolor{edoardo}{rgb}{0.5,0,0.5}
\definecolor{sergi}{rgb}{1,0,.5}
\definecolor{martin}{rgb}{0,.4,1}
\definecolor{stefan}{rgb}{0.7,.7,1}

\newcommand{\tii}{Quantum Research Center, Technology Innovation Institute (TII), Abu Dhabi}

\newcommand{\tuhh}{Hamburg University of Technology, Institute for Quantum Inspired and Quantum Optimization, Hamburg, Germany}

\newcommand{\fujitsu}{Fujitsu Germany GmbH, Mies-van-der-Rohe-Straße 8, 80807 Munich, Germany}

\begin{document}

\title{
Scalable Determination of Penalization Weights \\ for Constrained Optimizations on Approximate Solvers}

\begin{abstract}
    Quadratic unconstrained binary optimization (QUBO) provides problem formulations for various computational problems that can be solved with dedicated QUBO solvers, which can be based on classical or quantum computation. 
    A common approach to constrained combinatorial optimization problems is to enforce the constraints in the QUBO formulation by adding penalization terms. Penalization introduces an additional hyperparameter that significantly affects the solver's efficacy: the relative weight between the objective terms and the penalization terms. 
    We develop a pre-computation strategy for determining penalization weights 
    with provable guarantees for Gibbs solvers and polynomial complexity for broad problem classes. 
    Experiments across diverse problems and solver architectures, including large-scale instances on Fujitsu’s Digital Annealer, show robust performance and order-of-magnitude speedups over existing heuristics.
\end{abstract}

\author{Edoardo Alessandroni$^*$}
\affiliation{\tii}
\affiliation{SISSA — Scuola Internazionale Superiore di Studi Avanzati, Trieste, Italy}
\email{Corresponding author email: ealessan@sissa.it}
\author{Sergi Ramos-Calderer}
\affiliation{Centre for Quantum Technologies, National University of Singapore, Singapore.}
\affiliation{\tii}
\author{Michel Krispin}
\affiliation{\tuhh}
\author{Fritz Schinkel}
\affiliation{\fujitsu}
\author{Stefan Walter}
\affiliation{\fujitsu}
\author{Martin Kliesch}
\affiliation{\tuhh}
\author{Leandro Aolita}
\affiliation{\tii}
\author{Ingo Roth}
\affiliation{\tii}

\maketitle

Combinatorial optimization problems arise widely in both theoretical research and real-world applications, aiming to identify the optimal configuration of discrete decision variables that minimizes a given objective function.
The corresponding search space typically has a size scaling exponentially with the number of variables, rendering the optimization a computationally hard problem.
Methods for solving these problems have been a subject of intense study in both classical and quantum computation \cite{Abbas24ChallengesAndOpportunities}. 
In particular, powerful heuristics and dedicated hardware have been developed \cite{Johnson11QuantumAnnealingWith, Crosson21ProspectsForQuantum, 
FarGolGut14, 
Inagaki16CoherentIsingMachine, Honjo21_100k-spin, 
Aao19ApplicationOfDigital, Hideki23MathematicalAspectsOf, 
Okuyama19BinaryOptimizationBy, 
Goto19CombinatorialOptimization, Aatsumura21ScalingOutIsing} to tackle quadratic unconstrained binary optimizations (QUBOs) \cite{kochenberger14TheUnconstrainedBinary, blog_qubo_list_2021}.
These include approaches based on Gibbs sampling or simulated annealing \cite{Kirkpatrick_SA_83, Geman_SA_84, Glauber_MCMC_63, Hastings_MCMC_70, Hukushima_EMC_96, Mossel_Gibbs_13, Siddique_SA_16, karabin_SA_20}, special-purpose chips for digital annealing \cite{da-1, da-2, da-3}, and quantum primitives such as adiabatic evolution \cite{Kadowaki_QA_98, Farhi_QA_01, Rajak_QA_22, Munoz-Bauz_QA_25} and different variants of the \ac{QAOA} \cite{Farhi_QAOA_14, Blekos_QAOA_23, Cheng_QAOA_24, Amosy_QAOA_24, Yanakiev_QAOA_24}.

Practical solvers tend to be heuristic algorithms that yield approximate solutions. 
For instance, simulated annealing (or any solver based on Gibbs sampling) outputs a configuration sampled from a low-temperature thermal distribution over the combinatorial space,  
with an objective value close (but in general not equal) to the optimum. 
The colder the distribution, the higher the probability that the output state is very close to the optimal solution of the problem.
More precisely, for ideal Gibbs samplers, each configuration $\vec x$ is sampled with a probability proportional to $\e^{-E(\vec x)\beta}$, 
where $E(\vec x)$ is the energy of the configuration and $\beta$ is the inverse temperature of the system.
Sampling from this probability distribution or estimating the associated partition function is generally hard, so practical solvers rely on Markov Chain Monte Carlo techniques \cite{Gilks_MCMC_96, Kirkpatrick_SA_83} and annealing schedules to approximate it. These methods, such as simulated annealing, propose new configurations and accept or reject them according to their energy and target temperature, gradually biasing the search toward low-energy configurations. Hardware-accelerated implementations, such as Fujitsu's Digital Annealer~\cite{da-1, da-2, da-3}, further enhance the exploration of the energy landscape through fast, parallelized sampling mechanisms, thereby improving the chances of finding solutions with low objective.

\def\Binf{\ensuremath B_{
\!\bar{\mathcal F}}}
\def\Bfl{\ensuremath  B^{\scalebox{.7}{<}}_{\!\mathcal F}}
\def\Bfh{\ensuremath  B^{\scalebox{.7}{>}}_{\!\mathcal F}}

\begin{figure*}[htbp]
    \centering
        \includegraphics[width=\linewidth]{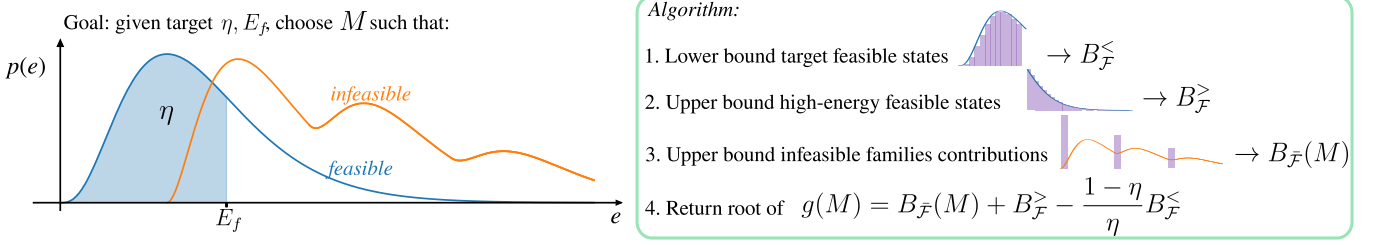}
    \caption{\emph{The big-$M$ problem for approximate solvers}
    is to ensure by the choice of a penalization weight $M$ that an approximate solver samples feasible solutions with probability at least $\eta$ when given a QUBO reformulation of a constrained optimization problem. 
    Optionally, one can additionally enforce the solutions to be below a certain energy threshold $E_f$. 
    We assume that the output distribution of a solver is qualitatively approximated by a Gibbs distribution at known inverse temperature, illustrated to the left in terms of the probability $p(e)$ of sampling a solution with energy $e$ conditioned on the solution being feasible or infeasible.  
    The density of infeasible solutions is naturally grouped into families (`humps'), each one characterized by $E^{(p)}(x)$ taking a certain value.
    Our method to determine $M$, summarized to the right, calculates (1.) a lower bound $\Bfl$ and (2.) an upper bound $\Bfh$ on the probabilities of sampling feasible points with objective below and exceeding $E_f$, respectively. 
    Together with (3.) an upper bound $\Binf$ on the probability of infeasible events, (4.) the penalization weight $M$ is determined as the unique root of the scalar function $g(M)$, that depends on the targeted success probability $\eta$. 
    We argue that this method is efficient for large classes of problems, prove theoretical guarantees on its performance, and demonstrate its practical applicability numerically.
    }
    \label{fig:fig1}
\end{figure*}

Importantly, general combinatorial optimization problems, arising in applications, usually include constraints that restrict the search space. 
Yet, a standard approach to tackle such problems with QUBO solvers
is to convert constraints into penalization terms weighted by a constant, commonly referred to as Big-$M$. 
The choice of this constant crucially shapes the energy landscape.
If the constant $M$ is set too high, the low-energy spectrum becomes uninformative: the solver is forced to prioritize constraint satisfaction at all costs, returning feasible states that may be far from optimal with respect to the original objective function. Conversely, if $M$ is chosen too small, the energies of infeasible configurations may lie near, or even below, the optimal feasible region, causing approximate solvers to sample disproportionately from infeasible solutions that violate constraints. 
Existing computationally-efficient Big-$M$ prescriptions tend to substantially overestimate the required penalty \cite{Harwood_routing, Leonidas_vehiclerouting, qiskit_doc}, which in practice degrades solution quality \cite{Harwood_routing, Azad_22_VHP, BigM_25}. Although recent work \cite{BigM_25} proposes a practical strategy that delivers significantly lower (but still sufficient) Big-$M$ values, it is primarily designed for exact solvers. 
Current approaches do not incorporate the degree of approximation characteristic of modern heuristic methods, such as a Gibbs sampler at finite temperature. 
A systematic penalization strategy for approximate solvers is missing. 

In this work, we introduce a novel, broadly applicable algorithm for a priori determining the penalization term for a given constrained optimization problem and a specified approximate solver. 
Our approach combines analytical considerations with uniform sampling over feasible configurations, to derive efficiently evaluable bounds on the solver’s output distribution, from which the penalization weight $M$ is calculated. 
We prove that for exact Gibbs solvers at arbitrary $\beta$,
the algorithm yields a QUBO reformulation with a controllable, guaranteed minimum probability of sampling feasible solutions with energy at most $E_f$. 
The algorithm's hyperparameters allow trading off run-time against
accuracy of approximating an optimal (minimal) $M$.

We further show that, for large classes of constrained optimization problems and appropriate hyperparameters, the algorithm's runtime and memory scales polynomially in the system size.
We numerically demonstrate the practical applicability of our method across relevant parameter regimes,  diverse problem instances and solver architectures. 
In particular, we benchmark the approach on representative constrained optimization problems, including the \ac{TSP}, the \ac{MNPP}, and \ac{PO}.
Besides small-scale evaluations for exact Gibbs sampling and intermediate-scale experiments with simulated annealing, 
we show that our method can be used to determine penalization weights for Fujitsu's Digital Annealer (version 3) on problem instances of up to several thousand bits. 
Although the Digital Annealer is known to deviate from our underlying assumption of thermal output distributions, we find that our method still qualitatively captures its behavior sufficiently well to achieve an order-of-magnitude speedup in time-to-solution compared to direct binary searches for $M$ based on simpler heuristics.

\begin{figure*}[htbp]
    \centering
        \includegraphics[width=\linewidth]{figs/bigM_prob.pdf}
    \caption{\label{fig:bigM_prob} Proportion of feasible solutions observed $\eta_\mathrm{eff}$(top) and mean objective energy $E^{(o)}$ of sampled feasible solutions (bottom) on the \ac{DA} solver (version 3) for different benchmarked problems (from left to right: \acf{MNPP}, \acf{TSP} instances from library \cite{TSPbenchmarks}, and \ac{TSP} with cities placed on a circle) with different values of $M$ and problem size. The gray areas lack mean energy points because only infeasible bitstrings were sampled for those values of $M$. In the top panels, we observe a phase transition from infeasible to feasible solutions as $M$ increases, indicating that selecting optimized values of $M$ is needed. 
    However, on the bottom panels, we notice a degradation in the quality of the sampled bitstrings.
    The mean of the energy $E^{(o)}$ of the outputs of the sampler increases for larger values of $M$, beyond a seemingly sweet spot that is located around the transition.
    For reference, the $M$ values suggested by the trivial choice in \cref{eq:M_l1} are several orders of magnitude larger than the shown scales: around $10^9$ for \ac{MNPP}, $10^8$ for benchmarks \ac{TSP} and $10^{10}$ for circle \ac{TSP}. Such extreme overshooting implies that the mean energy sampled by the solver would lie far from the desired minimum, undermining the optimization's effectiveness.
    }
\end{figure*}

\section*{Results}

We start the presentation of our results with formalizing the problem of determining penalization weights and empirically demonstrating the importance of the big-$M$ problem for approximate solvers using Fujitsu's Digital Annealer (version 3). 
In the subsequent section we describe our algorithm for determining $M$ and establish its theoretical guarantees, before turning to numerical validation and benchmark in the last section.

\subsection{The Big-\texorpdfstring{$M$}{M} problem for approximate combinatorial solvers}

In the following, we consider the constrained optimization problems
\begin{equation}\tag{P}\label{lcBQP}
    \operatorname*{minimize}_{\vec x \in \{0,1\}^n}\ E^{(o)}(\vec x) = \vec x^t Q \vec x 
    \quad\operatorname{subject\ to}\quad A\, \vec x = \vec b
\end{equation}
given by $Q \in \mathbb R^{n\times n}$, 
$A \in \ZZ^{m\times n}$, and $\vec b \in \ZZ^m$.
This special type of \ac{LCBO} problems is able to capture complex problem formulations such as polynomially constrained problems and integer-variable problems, which can all be cast into this form using certain \emph{gadgets} \cite{Williams_EqConstr_13, Glover_Linearization_65, Rosenberg_Quadratization_75, Nemhauser_BinaryMap_88, BigM_25}.

A constrained problem in the form \eqref{lcBQP} can be converted to a \ac{QUBO} problem by promoting the constraints $A\, \vec x = \vec b$ to quadratic penalty terms 
\begin{equation}
    E^{(p)}(\vec x) = (A\vec x - \vec b)^2,
\end{equation}
weighted by a penalization constant $M>0$. 
The new function to minimize, now a sum of the objective and penalization contributions, reads
\begin{equation}\tag{P$_{\!M}$}\label{QUBO}
    \operatorname*{minimize}_{\vec x \in \{0,1\}^n}\ E(\vec x) = \vec x^t Q \vec x + M(A\vec x - \vec b)^2\,.
\end{equation}
The minimization is now over the entire space $\{0,1\}^n$, but infeasible bit strings will incur an energy penalization.
We consider \eqref{QUBO} an \emph{exact reformulation} of \eqref{lcBQP} when the optimal points remain unchanged. 
Thus, solving an exact reformulation \eqref{QUBO} with an exact combinatorial solver yields an optimal solution to the original constrained problem. 
For an exact solver, its runtime and required computational precision depend on $M$. 
This insight motivates choosing a minimal penalization constant $M_\exact^\ast$ that still ensures an exact reformulation. 
As shown in \cite{BigM_25}, while finding $M_\text{\rm exact}^\ast$ is NP-hard, 
good approximations to it can be found using the following strategy: 
given a feasible point $\xfeas$, a lower-bound $\ful$ on the objective, with $\ful\leq E^{(o)}(x)$ for any $x\in \{0,1\}^n$, and any constant $\delta > 0$, \cref{QUBO} with
\begin{equation}\label{eq:MviaBounds}
    M \coloneqq f(\xfeas) - \ful + \delta\,
\end{equation}
is an exact reformulation of \eqref{lcBQP}.
A feasible point and an objective lower bound can be efficiently pre-computed using, e.g. greedy algorithms and SDP relaxations, respectively. 
This approach leads to a tighter upper bound to $M_\text{\rm exact}^\ast$ than the trivial upper bound $M_\text{\rm exact}^\ast \leq \|Q\|_{\ell_1} + \delta$ \cite{BigM_25}, which in turn improves the run-time of solvers.

However, approximate solvers do not necessarily return the optimal point but a solution with a low objective value approximating the true minimum. 
For this reason, using the strategy described in Ref.~\cite{BigM_25} will generally not ensure feasibility of the solution for an approximate solver.  
At the same time, also for approximate solvers, one expects that choosing a 
large value for $M$ will rapidly deteriorate the quality of the solver's results, manifesting the so-called \emph{big-$M$ problem}. 

As a first result, we establish the need for `fine-tuning' $M$ in the case of  Fujitsu's \ac{DAU} for some examples. 
\cref{fig:bigM_prob} shows the frequency of feasible solutions and their mean energy using different values of $M$ for different problem instances and sizes. 
We observe that, in the regime where the majority of solutions are feasible, the mean objective value increases significantly with larger $M$.  
Thus, to ensure low objectives, it is important to choose a value of $M$ close to the transition in the feasibility probability.  
And the `good' regime for $M$ becomes narrower as the system size increases. 
Notice also that a naive choice of $M$, like the one in \cref{eq:M_l1}, overestimates the transition point by several orders of magnitude, and consequently will return solutions with an undesired, high objective value.

These observations motivate us to devise a systematic strategy for solving the \emph{big-$M$ problem}.  
We begin by formalizing the problem statement with the following definition. 
Let $\mathcal F = \{\vec x \mid A\vec x = \vec b\} \subset \{0,1\}^n$ denote the subspace of feasible points.

\begin{definition}
\label{def:eta-reformulation}
The QUBO problem \eqref{QUBO} is an \emph{$\eta$-reformulation of the problem \eqref{lcBQP} for a solver with guaranteed energy threshold $E_f$}, or short \emph{$\eta$-reformulation}, if the solver's output distribution on \eqref{QUBO} fulfills 
\begin{equation} \label{eq:eta_reformulation}
    \Pr[\{x \in \mathcal F \mid E(x) \leq E_f\}] \geq \eta\,.
\end{equation}  
\end{definition}

In other words, such a reformulation ensures observing feasible solutions of a low energy with (at least) constant probability. 
We consider $M$ to be optimal if it is the minimal value guaranteeing an $\eta$-reformulation and denote it as $M^\ast_\eta$.
While not formally established, we expect that finding $M^\ast_\eta$ will be in general as hard as solving the original optimization problem.  
Thus, our goal is to devise an efficient strategy that approximates $M^\ast_\eta$ from above and benchmark the quality of its solution for different instances.

\subsection{A Big-\texorpdfstring{$M$}{M} strategy} \label{sec:recipe}

\def\einf{\overline{\mathcal F}}
\def\efh{\mathcal F \cap \{ E^{(o)} > E_f\}}
\def\efl{\mathcal F \cap \{ E^{(o)} \leq E_f\}}
\def\nv{n_\mathrm{pen}}
\def\ELB{E_\mathrm{LB}}
\def\vcut{v_\mathrm{cut}}

\def\LambdaLow{\ensuremath \Lambda^{\!\scalebox{.7}{<}}}
\def\LambdaHigh{\ensuremath \overline{\LambdaLow}}


An apparent challenge in defining a strategy for solving the Big-$M$ problem is that the definition of an $\eta$-reformulation depends on the actual output distribution of the solver under consideration. 
This is different from the exact reformulation, which only depends on the problem instance itself. 
Output distributions of approximate solvers are generally not known a priori and may depend in complex ways on optimization schedules and other hyperparameters.  
To overcome this obstacle, we consider Gibbs samplers as the prototypical proxies of an approximate solver.
Indeed, a large class of approximate optimization algorithms, including Metropolis-like dynamics, simulated annealing, and more general MCMC-based solvers, are fundamentally grounded in Gibbs sampling principles, as they are designed to progressively concentrate probability mass onto low-energy configurations of an associated Gibbs distribution \cite{Geman_SA_84, Kirkpatrick_SA_83, Metropolis_53}.

A Gibbs sampler at inverse temperature $\beta$ has output distribution $p(\vec x) = \mathcal N_\beta e^{-\beta E(\vec x)}$, where $\mathcal{N}_\beta$ 
 is a normalization constant. 
The degree of approximation of a Gibbs sampler as a solver is, thus, captured by a single parameter $\beta \ge 0$, tending to an exact solver for $\beta \to \infty$. 

Given the output distribution of the solver, it is in principle possible (but in general inefficient) to calculate $M^\ast$ exactly. The general idea for an efficient strategy, illustrated in \cref{fig:fig1}, is to instead calculate bounds on the probability of three distinct events: observing (i) a feasible point with objective smaller or equal than $E_f$, (ii) a feasible point with objective larger than $E_f$ and (iii) an infeasible point.  
These three bounds can then be combined to determine $M^\ast \geq M^\ast_\eta$ which will be closer to $M_\eta^\ast$ the tighter the bounds are.  

Evaluating the bounds requires the following weight functions (non-normalized densities) depending on the problem instance: 
first, we define the \emph{penalization degeneracy} as 
\begin{equation}
    \nv(v) = |\{\vec x \in \{0,1\}^n : E^{(p)}(\vec x) = v\}|.  
\end{equation}
This represents the number of bitstrings with penalty function value $E^{(p)}(\vec x) = v$ and enables control over the distribution of infeasible points. 
For many problems, the penalization degeneracy can be obtained  analytically. 
We derive the expressions for  \ac{MNPP}, \ac{TSP} and \ac{PO} in \cref{app:penalization_degeneracies} for small $v$.  
We have found that in practice it is sufficient to evaluate $n_\text{\rm pen}(v)$ only for $v \leq  v_\mathrm{cut}$ up-to some small constant $v_\mathrm{cut}$.
Alternatively, one can resort to a  coarse sampling of the penalization energies of bitstrings and subsequent fit.

Second, we  make use of a lower bound $\ELB \leq E^{(o)}(\vec x)$ on the unconstrained objective function for all $\vec x$. 
Such a bound can be efficiently computed using an SDP relaxation.

Third, we introduce the feasible spectral weights 
\begin{equation} \label{eq:feasible_spectrum}
    n_\Delta(e) = |\{\vec x \in \mathcal F : e \leq E^{(o)}(\vec x) < e + \Delta \}|.
\end{equation}
This can be approximately estimated by randomly sampling a number $N_s$ of bit-strings from a uniform distribution over the feasible subspace $\mathcal F$ and counting the sampled strings $\vec x$ for which the objective energy $E^{(o)}(\vec x)$ lies within the considered range.
In practice, it is sufficient to estimate $n_\Delta(e)$ only for $e \in \Lambda = \{0, \Delta, 2\Delta, \dots, \left\lceil(E_\mathrm{max} - \ELB)/\Delta\right\rceil \Delta\}$, where $E_\mathrm{max}$ is the maximal energy sampled.
For structured problems like \ac{TSP}, \ac{MNPP} or \ac{PO}, the sampling is efficient, see \cref{app:subroutines}.

Algorithm~\ref{alg:M} combines these estimates to compute the bounds $\Bfl$, $\Bfh$, and $\Binf$, that bound the probabilities of observing feasible points with a low objective value, feasible points with a high objective value, and infeasible points, respectively. 
From $\Bfl$, $\Bfh$, and $\Binf$ we obtain an estimate $M^\ast$ for $M$ in the last step. 
The correctness of the algorithm in the limit of infinite samples is established by the following theorem.

\begin{algorithm}[tb]
\setstretch{1.35}
\caption{\label{alg:M}
$M(E^{(o)}, E^{(p)}, E_f, \beta, \eta, \vcut, N_s, \Delta)$}
\SetKwInOut{Input}{Input}

\Input{
$E^{(o)}$ (objective),
$E^{(p)}$ (penalty),
$E_f$ (energy threshold),
$\beta$ (inverse temperature), 
$\eta$ (success probability),
$N_s$ (sample size),
$\vcut$ (degeneracy cut-off), and
$\Delta$ (energy resolution).
}
\vspace{0.5em}
\hrule
\vspace{0.5em}

Determine $E_\mathrm{LB}$ from SDP relaxation of  $\operatorname{minimize}_{\vec x \in \{0,1\}^n} E^{(o)}(\vec x)$ \label{alg:line:ELB} 

Estimate $n_\Delta(e + \ELB)$ for each $e \in \Lambda$ from $N_s$ uniform samples from $\mathcal F$ \label{alg:line:Usample} 

Calculate $\Bfl \coloneqq \sum_{e \in \LambdaLow} \e^{-\beta (e+\Delta)} n_\Delta(e + \ELB)$ with 
$\LambdaLow =\{0, \Delta, \ldots, \left\lfloor(E_f - \ELB)/\Delta\right\rfloor \Delta\} \subset \Lambda$ \label{alg:line:Bf_<}

Calculate $\Bfh \coloneqq \sum_{e \in \LambdaHigh} \e^{-\beta e} n_\Delta(e + \ELB)$ for 
$\LambdaHigh \coloneqq \Lambda\setminus\LambdaLow$
\label{alg:line:Bf_>}

Compute the penalization degeneracy $n_{\mathrm{pen}}(v)$ for $v \in \{1, \ldots, v_\text{cut}\}$\label{alg:line:pen_deg}
Set $\Binf(M) \coloneqq \sum_{v = 1}^{\vcut} \e^{-\beta M v} \nv(v)\,$ \label{alg:line:Binf}

Set $g(M) \coloneqq \Binf(M) + \Bfh - \tfrac{1 - \eta}\eta \Bfl$ \label{alg:line:g}

Determine $M^\ast$ as the root of $g(M)$. \label{alg:line:g_root}

\textbf{Return} $\max\{0,M^*\}$ or $\{\}$ if no roots were found.
\end{algorithm}

\begin{theorem}\label{thm:guarantee}
    In the limit $N_s \to \infty$ and for $\vcut = \max_{\vec x} E^{(p)}(\vec x)$ the following holds:
    If Algorithm~\ref{alg:M} returns $M^\ast \neq \{\}$, then \eqref{QUBO} with $M = M^\ast$ is an $\eta$-reformulation with guaranteed energy threshold $E_f$ for a Gibbs sampler at inverse temperature $\beta$.
\end{theorem}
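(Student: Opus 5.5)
The plan is to write the Gibbs probability of the target event as a ratio of partition-function pieces. I then reduce the $\eta$-reformulation condition to one inequality between these pieces and match each piece against one of the three quantities built in Algorithm~\ref{alg:M}. Partition $\{0,1\}^n$ into $\mathcal F_< = \efl$, $\mathcal F_> = \efh$ and $\einf$. On $\mathcal F$ we have $E=E^{(o)}$, so $\mathcal F_<$ is exactly the event in \eqref{eq:eta_reformulation}. Set $Z_S = \sum_{\vec x\in S}\e^{-\beta E(\vec x)}$. The Gibbs probability of $\mathcal F_<$ is $Z_{\mathcal F_<}/(Z_{\mathcal F_<}+Z_{\mathcal F_>}+Z_{\einf})$, and this is $\ge\eta$ if and only if
\begin{equation*}
Z_{\einf} + Z_{\mathcal F_>} \le \tfrac{1-\eta}{\eta}\, Z_{\mathcal F_<}.
\end{equation*}
Multiplying through by $\e^{\beta \ELB}$ shifts every energy by $-\ELB$. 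It therefore suffices to prove three bounds: $\e^{\beta\ELB}Z_{\mathcal F_<}\ge \Bfl$, $\e^{\beta\ELB}Z_{\mathcal F_>}\le \Bfh$, and $\e^{\beta\ELB}Z_{\einf}\le \Binf(M)$. Once these hold, $g(M)\le 0$ gives the desired inequality.

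First, the limit $N_s\to\infty$ means line~\ref{alg:line:Usample} returns the exact counts $n_\Delta(e+\ELB)$, after rescaling the empirical frequencies by $|\mathcal F|$. I will state this normalization explicitly. The bin indexed by $e\in\Lambda$ contains the feasible $\vec x$ with $e\le E^{(o)}(\vec x)-\ELB<e+\Delta$. Because $E^{(o)}\ge \ELB$ and $\Lambda$ reaches past $E_\mathrm{max}$, these bins cover $\mathcal F$.
\begin{itemize}
\item For the upper bound on $\mathcal F_>$: each point in a bin of $\LambdaHigh$ has shifted weight at most $\e^{-\beta e}$. Every point of $\mathcal F_>$ lies in such a bin, up to the boundary issue below. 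Summing gives $\e^{\beta\ELB}Z_{\mathcal F_>}\le\Bfh$.
\item For the lower bound on $\mathcal F_<$: each point in a bin of $\LambdaLow$ has shifted weight at least $\e^{-\beta(e+\Delta)}$. Summing gives $\e^{\beta\ELB}Z_{\mathcal F_<}\ge\Bfl$.
\item For the infeasible part: with $\vcut=\max E^{(p)}$, every $\vec x\in\einf$ has integer penalty $v\in\{1,\dots,\vcut\}$, since $A,\vec b$ are integral. Then $E(\vec x)-\ELB\ge Mv$, so $\e^{\beta\ELB}Z_{\einf}\le\sum_v n_\mathrm{pen}(v)\e^{-\beta Mv}=\Binf(M)$. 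This uses $M\ge 0$ only through the sign of $Mv$, which holds for the returned value.
\end{itemize}

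Next I handle the root and the clipping. By construction $g(M^\ast)=0$. $\Binf$ is strictly decreasing in $M$, while $\Bfh$ and $\Bfl$ do not depend on $M$, so $g$ is decreasing. If $M^\ast<0$, the returned value $0$ therefore satisfies $g(0)\le g(M^\ast)=0$. In both cases $g(\max\{0,M^\ast\})\le 0$. Combined with the three bounds, this gives the displayed inequality and hence \eqref{eq:eta_reformulation}.

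The main obstacle I expect is the boundary bin. Let $k=\lfloor(E_f-\ELB)/\Delta\rfloor$. The top bin of $\LambdaLow$ is $[\ELB+k\Delta,\ELB+(k+1)\Delta)$, and it may contain feasible points with $E^{(o)}>E_f$. Such points are counted in $\Bfl$, but they belong to $\mathcal F_>$ and are not covered by $\Bfh$. This breaks both the lower bound on $Z_{\mathcal F_<}$ and the upper bound on $Z_{\mathcal F_>}$. My first fix is to note that the statement is clean when $\Delta$ divides $E_f-\ELB$, or when the estimate of $n_\Delta$ in that last bin counts only strings with $E^{(o)}\le E_f$; in the exact-count limit both are harmless. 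If neither convention is intended, I would instead move the straddling bin into $\LambdaHigh$ by replacing $\lfloor\cdot\rfloor$ with $\lceil\cdot\rceil-1$. I would check that this matches the algorithm's convention before writing the final proof.
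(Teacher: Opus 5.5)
Your skeleton is the paper's proof. It uses the same three-way split and the same three bounds up to a common factor: the paper uses normalized probabilities and $c=\mathcal N_\beta \mathrm{e}^{-\beta E_{\mathrm{LB}}}$, while you use unnormalized $Z$'s. The combination through $g(M^\ast)=0$ is also the same, and your monotonicity argument for $M^\ast<0$ is a more careful version of the paper's last line.

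The gap is the boundary bin you flag, and it is worse than you think. The paper's proof writes $p$ and $\Pr[\mathcal F_>]$ as exact sums over the bins of $\Lambda^{<}$ and $\Lambda^{>}$. Because of the straddling bin this is false, and the theorem as written actually fails. Take $n=1$, the constraint $x_1=1$, $E^{(o)}(x)=\tfrac{\Delta}{2}x_1$, $E_{\mathrm{LB}}=0$ (exact here) and $E_f=0$. The only feasible point has $E^{(o)}=\Delta/2>E_f$, so $p=0$ for every $M$. Yet $\Lambda^{<}=\{0\}$ and the bin $[0,\Delta)$ contains that point. Hence $B^{<}_{\mathcal F}=\mathrm{e}^{-\beta\Delta}$, $B^{>}_{\mathcal F}=0$, and $B_{\bar{\mathcal F}}(M)=\mathrm{e}^{-\beta M}$ (since $n_{\mathrm{pen}}(1)=1$). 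So $g$ has the root $M^\ast=\Delta+\beta^{-1}\ln\frac{\eta}{1-\eta}$ and the algorithm returns a value.

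This example has $E_f-E_{\mathrm{LB}}=0\cdot\Delta$, so your first fix is wrong. With half-open bins $[e,e+\Delta)$, the top bin of $\Lambda^{<}$ is always the one containing $E_f-E_{\mathrm{LB}}$. In the divisible case it is $E^{(o)}\in[E_f,E_f+\Delta)$, which lies almost entirely above $E_f$.

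Your $\lceil\cdot\rceil-1$ patch coincides with $\lfloor\cdot\rfloor$ whenever $(E_f-E_{\mathrm{LB}})/\Delta\notin\mathbb{Z}$. With $E_f=\Delta/4$ in the example above it still gives $\Lambda^{<}=\{0\}$ and $p=0$.

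Counting only strings with $E^{(o)}\le E_f$ in the last bin repairs the lower bound. But then feasible points with $E_f<E^{(o)}<E_{\mathrm{LB}}+(k+1)\Delta$ enter neither sum, so $\mathrm{e}^{\beta E_{\mathrm{LB}}}Z_{\mathcal F_>}\le B^{>}_{\mathcal F}$ fails.

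What works is your ``move the straddling bin'' idea with the right index. Set $k=\lfloor (E_f-E_{\mathrm{LB}})/\Delta\rfloor$ and $\Lambda^{<}=\{0,\Delta,\ldots,(k-1)\Delta\}$, that is, keep exactly the bins with $e+\Delta\le E_f-E_{\mathrm{LB}}$, and let $\Lambda^{>}=\Lambda\setminus\Lambda^{<}$. Then every low bin lies inside $\mathcal F_<$, and the high bins cover $\mathcal F_>$; the extra points with $E^{(o)}\le E_f$ only enlarge $B^{>}_{\mathcal F}$. Your three bounds and the rest of your argument then go through unchanged.

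Alternatively, keep bin $k$ but split it. Put its part with $E^{(o)}\le E_f$ into $B^{<}_{\mathcal F}$ with weight $\mathrm{e}^{-\beta(k+1)\Delta}$, and its part above $E_f$ into $B^{>}_{\mathcal F}$ with weight $\mathrm{e}^{-\beta k\Delta}$.

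Either way, the modified definition of $\Lambda^{<}$ must be part of the statement you prove, because the theorem does not hold for the algorithm as written.
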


\begin{proof}
The proof first establishes that the three bounds evaluated in the algorithm actually bound the corresponding events, and finally that they are combined into a bound of $M$. 

By the theorem's assumption,$\vec x \in \{0,1\}^n$ is sampled according to the Gibbs distribution $p_\beta(\vec x) = \mathcal N_\beta \e^{-\beta E(\vec x)}$ with normalization $\mathcal N^{-1}_\beta = \sum_{\vec x}\e^{-\beta E(x)}$. 
By $\overline{\mathcal F}\coloneqq \Set{0,1}^n\setminus \mathcal F$ we denote the complement of $\mathcal F$. 
We consider the different values $v\in \ZZ_+$ the penalty term $E^{(p)}$ can take and decompose $\overline{\mathcal F}$ into the preimages of $v\neq 0$ as $\overline{\mathcal F} = \bigcup_{v=1}^\infty (E^{(p)})^{-1}(\Set v)$. 
This observation allows us to use the lower bound $\ELB \leq E^{(o)}(\vec x)$ from step~\ref{alg:line:ELB} of the algorithm to bound the probability of observing an infeasible solution as 
\begin{equation}\label{eq:Binf}
\begin{split}
    \Pr[\einf] &= \sum_{x \in \overline{\mathcal F}} p_\beta(\vec x) 
    = \mathcal N_\beta \sum_{v = 1}^{\infty} 
    \sum_{\vec x \in (E^{(p)})^{-1}(\Set v)} \e^{-\beta (E^{(o)}(\vec x) + M v)} \\
    &\leq 
    \mathcal N_\beta \e^{-\beta \ELB} \sum_{v = 1}^{\infty} \e^{-\beta M v} \sum_{\vec x \in (E^{(p)})^{-1}(\Set v)} 1  \\
    &=
    \mathcal N_\beta \e^{-\beta \ELB} \sum_{v = 1}^{\infty} \e^{-\beta M v} n_\mathrm{pen}(v) = c \,\Binf(M), 
\end{split}
\end{equation}
where we defined the positive constant $c = \mathcal N_\beta e^{-\beta \ELB} > 0$
and $\Binf(M)$ from step~\ref{alg:line:Binf}.

Next, we show that $\Bfl$ from step~\ref{alg:line:Bf_<} is a lower bound on the probability of observing feasible points with low objective. 
In the limit of $N_s\to \infty$, our estimate for $n_\Delta$ is exact. We divide the relevant energy interval in bins of size $\Delta$, with steps $\LambdaLow = \{0, \Delta, \ldots, \left\lfloor(E_f - \ELB)/\Delta\right\rfloor \Delta\}$. 
We denote the set of feasible states in these bins by $b(e) = \{\vec x \in \mathcal F : E^{(o)}(\vec x) - \ELB \in [e, e + \Delta)\}$, 
with cardinality $|b(e)| = n_\Delta(e + \ELB)$.
Since $E^{(p)}(\vec x) = 0$ for any $\vec x \in \mathcal F$, we can write the probability from \cref{eq:eta_reformulation} as
\begin{equation}\label{eq:Bfl}
\begin{split}
    p&\coloneqq
    \Pr[\efl] \\    
    &= 
    \mathcal N_\beta 
    \sum_{e \in \LambdaLow} \sum_{\vec x \in b(e)} \e^{-\beta E^{(o)}(\vec x)} \\
    &\geq \mathcal N_\beta \e^{-\beta \ELB}\sum_{e \in \LambdaLow} \e^{-\beta (e+\Delta)} n_\Delta(e + \ELB) = c \Bfl\,.
\end{split}
\end{equation}  

Similarly, for the feasible events with a high objective value, defining $\Bfh$ as in line \ref{alg:line:Bf_>}, we ensure that
\begin{equation}\label{eq:Bfh}
\begin{split}
    &\Pr[\efh] \\
    &= 
    \mathcal N_\beta 
    \sum_{e \in \LambdaHigh} \sum_{\vec x \in b(e)} \e^{-\beta E^{(o)}(\vec x)} \\
    &\leq \mathcal N_\beta \e^{-\beta \ELB}\sum_{e \in \LambdaHigh} \e^{-\beta e} n_\Delta(e + \ELB) = c \Bfh\,.
\end{split}
\end{equation}

The events are mutually exclusive and complete. Hence, 
\begin{equation}
    \Pr[\einf] + \Pr[\efh] + \Pr[\efl] = 1, 
\end{equation}
which in terms of the bounds \eqref{eq:Binf} and \eqref{eq:Bfh} implies that 
\begin{equation}
\begin{split} \label{eq:Bfh+Binf}
    c\,(\Binf(M) + \Bfh) 
    \geq 
    1- \Pr[\efl]
    = 1 - p\,.
\end{split}
\end{equation}

If step~\ref{alg:line:g} yields $g$ that has a positive root $M^\ast$, then 
\begin{equation} \label{eq:compound_inequality_proof}
\begin{split}
    0 &= \Binf(M^\ast)  + \Bfh - \frac{1-\eta}\eta \Bfl \\
    &\geq  c^{-1}\biggl(1 - p - \frac{1- \eta}\eta p \biggr) =  c^{-1}\biggl( 1 - \frac p\eta\biggr)\,,  
\end{split}
\end{equation}
and thus, with $M=M^\ast$, we have $p \geq \eta$. 

Finally, if $g$ has a negative root, then $p \geq \eta$ already holds for $M^\ast = 0$. 
\end{proof}

A few comments on the algorithm are in order: 

(i) As we argue in \cref{app:eta_exist}, if no $\eta$-formulation with guaranteed threshold $E_f$ exist, the algorithm returns $\{\}$. 
In this case, we can choose any permissible $\eta < \eta_\mathrm{exist} = (1 + \Bfh/\Bfl)^{-1}$ and rerun the algorithm, while ensuring the existence of a solution.

(ii) One can set $E_f = \infty$ (and truncate $\LambdaLow$ at the last energy sampled in step \ref{alg:line:Usample} to not require any guarantee on the objective of the solutions. In this regime $\Bfh = 0$ and Theorem~\ref{thm:guarantee} continues to hold. The algorithm then effectively targets feasibility-only sampling. 

(iii) We numerically observe that in our problem instances, the penalization degeneracy does not grow exponentially, see \cref{app:penalization_degeneracies}. 
Thus, summands with larger $v$ entering $\Binf$ are eventually exponentially suppressed.  
This allows one to use a small value for $v_\text{\rm cut}$ in practice without introducing an error. 

Let us now analyze the time and memory complexity of the algorithm in more detail. 
In particular, we establish in the following that 
\emph{the algorithm is efficient for problem instances with 
(i) polynomially bounded entries for $Q$, $A$, and $b$ in the problem specification \eqref{lcBQP}, 
(ii) efficient uniform random sampling from the feasible subspace and 
(iii) efficient evaluation of $n_\text{\rm pen}(v)$.}  
To control the algorithm's complexity, we impose assumptions on the magnitude of the objective function $E^{(o)}$ and the penalization term $E^{(p)}$.  
Note, that if $Q_{ij}, A_{ij}, b_j \in \mathcal O(1)$ for all $i,j$, then  $E^{(o)} (\vec x) = \vec x^t Q \vec x\le \norm{Q}_2 \in \mathcal O (n^2)$ for all $\vec x \in \{0,1\}^n$ and the penalization term $E^{(p)} (\vec x) = \vec x^t A^t A \vec x + 2 \vec x^t A b + b^tb$ with $m$ constraints scales as  $\mathcal O(mn^2)$ dominated by its first summand. 
The examples in our work (MNSP, TSP and PO) indeed all have entries in $A$ and $b$ of constant magnitude independent of the system size. 
More generally, we also directly conclude that if the entries of $Q$, $A$, and $b$ grow at most polynomially with the system size, 
$E^{(o)} \in \mathcal O(\operatorname{poly}(n))$ and $E^{(p)} \in \mathcal O(\operatorname{poly}(n,m))$. 

The maximum penalization value $v_\mathrm{max} = \max_{\vec x} E^{(p)}(\vec x)$ will thus scale like $mn^2$ for the benchmarked problems and other bounded-entries similar problems, or will be in $\poly(m,n)$ in general.

Let us discuss each step of Algorithm~\ref{alg:M}: 

The SDP in \emph{Step~\ref{alg:line:ELB} } generally takes $O(n^6)$ time and $O(n^4)$ memory complexity \cite{nesterov1994interior}.  
Considerable speed-ups can potentially be achieved using sketching techniques \cite{yurtsever2017sketchy}.

\begin{figure*}[htbp]
    \centering
        \includegraphics[width=\linewidth]{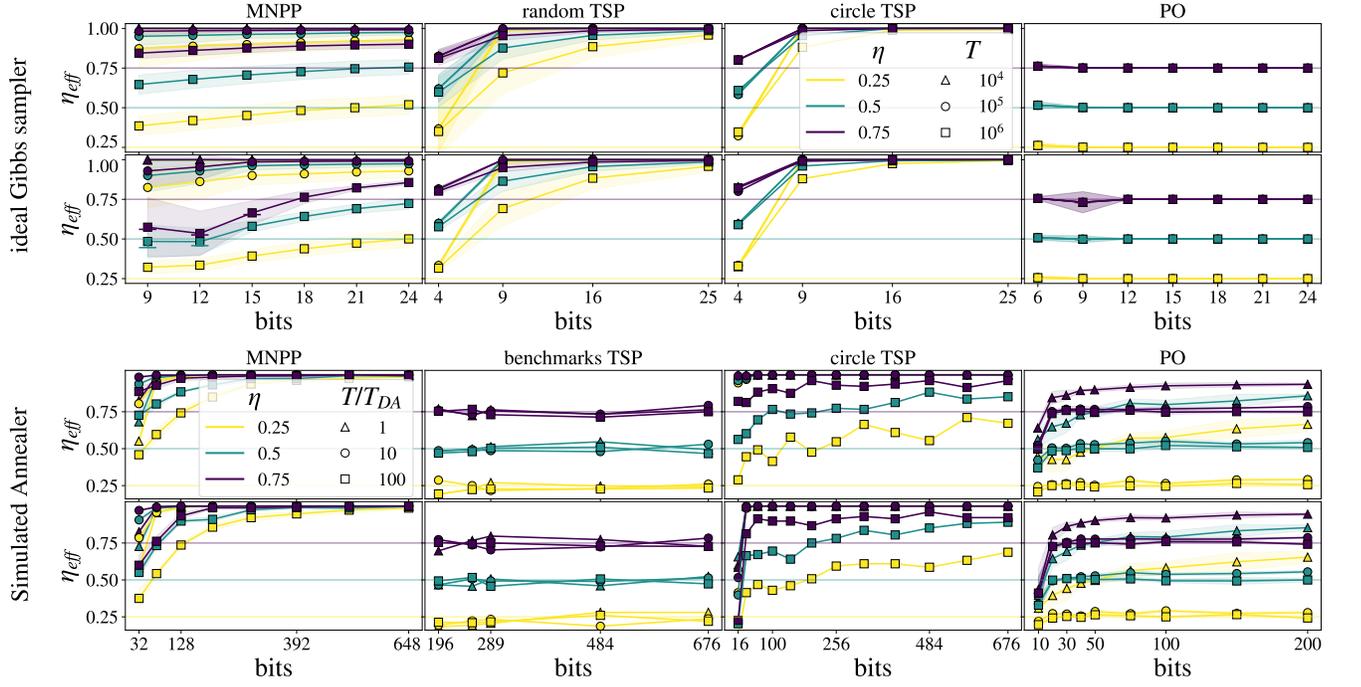}
    \caption{\emph{Effective success probability $\eta_\mathrm{eff}$ of an ideal Gibbs sampler (top rows) and \acf{SA} (bottom rows)} for sampling feasible points, using a \ac{QUBO} reformulation with penalization weight $M^\ast$ calculated by Alg.~\ref{alg:M} for  
    different constrained optimization problems (colums, see \cref{app:benchmark_problems} for details) and system sizes. 
    For each solver, in the first row we only require feasibilty ($E_f=\infty$), while for the second row we further require solutions with objective  smaller than a finite, problem-dependent $E_f$.
    Different colors denote target success probabilities $\eta \in {0.25,0.5,0.75}$, with horizontal reference lines at these values. Marker shapes indicate sampler temperature $T=\beta^{-1}$. For SA, temperatures are obtained by rescaling Digital Annealer schedules as $T=\phi T_{\mathrm{DA}}$, with $\phi\in{1,10,100}$; for \ac{PO}, schedules are approximated using instances of the same size from other benchmarks. Solid lines and markers show averages over $100$ (ideal Gibbs sampler) or $4$ (\acs{SA}) instances, with shaded standard deviation; \emph{circle \ac{TSP}} and \emph{benchmark \ac{TSP}}, only define a single instance per system size. Per instance, $10^3$ (ideal Gibbs) or $128$ (\acs{SA}) samples are drawn.  \ac{PO} uses $10^5$. 
    We generally observe that $\eta_\text{\rm eff}$ is larger than $\eta$ showing that Alg.~\ref{alg:M} yields admissible $\eta$-reformulations.  
    For finite $E_f$, some combinations of $T$, $\eta$, and $E_f$ make the target $\eta$ unattainable (see \cref{app:eta_exist}); in these cases, $\eta$ is reduced. Such instances are indicated in the optimality-focused \ac{MNPP} panels (bottom left) by short horizontal bars marking the reduced target below the achieved $\eta_\mathrm{eff}$.
    }
    \label{fig:eta_eff_GSA}
\end{figure*}

For \emph{Step~\ref{alg:line:Usample} } to be efficient, we need to choose $N_s$ to be at most polynomially in $m,n$. 
This will, in turn, introduce a statistical error in the estimation of $n_\Delta$ and, thus, $\Bfl$ and $\Bfh$. 
If we choose the parameter $\Delta$ sufficiently large, i.e.\ the discretization sufficiently coarse, the statistical error is controlled.  
More precisely, extending Theorem~\ref{thm:guarantee}, we show in \cref{app:guarantee_realistic} that with probability $1-\delta$ Algorithm~\ref{alg:M} yields an $(\eta - \epsilon)$-reformulation provided that $N_s \geq 2 /(\epsilon\delta)^2$ and $\Delta \in \Omega(\poly(n,m) + \beta^{-1} n)$. 
We refer to the appendix for more details. 
Obtaining a single uniform sample from the feasible subspace is efficient for many problems. 
In particular, for TSP, MNPP and PO it takes $\mathcal O(n)$, $\mathcal O(n)$ and $\mathcal O(n^2)$ time, respectively ($\mathcal O(n_v^2)$, $\mathcal O(N+P)$ and $\mathcal O(N^2)$, in the individual problem parameters, see \cref{app:benchmark_problems}). In all three cases, the memory requirements are of the same order or smaller than the corresponding time complexity. Thus, for these instances and $\Delta \in \Omega(\poly(n,m))$ and $\beta$ inverse polynomially in the system size $n$, Step~\ref{alg:line:Usample} is efficient. 

When the objective function is polynomially bounded and $\Delta \in \mathcal O(\poly(n,m))$, also the cardinality of the lattice $\Lambda$ is at most polynomial.  
Thus, the summations in \emph{Steps \ref{alg:line:Bf_<} and \ref{alg:line:Bf_>}} are efficient. 

\emph{Step~\ref{alg:line:pen_deg}} computes the penalization degeneracy. 
Even setting $v_\text{\rm cut} \to \infty$, this involves evaluating $n_\text{\rm pen}(v)$ for $v \in \{1, \ldots, v_\mathrm{max} = \max_{\vec x} E^{(p)}(\vec x)\}$. 
For $Q$, $A$, and $b$ with polynomially bounded entries in $n,m$, we thus compute $n_\text{\rm pen}$ for polynomially many arguments.  
For instances where a single evaluation of $n_\text{\rm pen}$ is efficient, the step is, thus, overall efficient. 

\emph{Step~\ref{alg:line:Binf}} is efficient under the same assumption as Step~\ref{alg:line:pen_deg}. 

\emph{Step~\ref{alg:line:g_root}} requires constant evaluation of $\Binf$ (Step~\ref{alg:line:Binf}) to determine an integer approximation to a root of $g$ \cite{burden2015numerical}.

Thus, we conclude that under assumptions that are often met by problems under consideration, one can choose the parameters of the algorithm such that it is efficient and guaranteed to yield the targeted reformulation.  
In particular, for MNPP, TSP, and PO, we have $O(1)$ entries of $A$ and $b$ and $v_\mathrm{max}$ will scale like $mn^2$, which simplifies the overall complexity analysis.

Putting everything together and under reasonable assumptions, the dominant contributions to the total computational cost arise from the SDP relaxation, scaling as $\mathcal{O}(n^6)$ and from uniformly sampling the feasible subspace $\mathcal{O} (N_s n^2) = \mathcal{O} (\epsilon^{-2}n^2)$). Since, for a fixed target success probability $\eta$, the precision parameter $\epsilon$ can be treated as a constant, the overall complexity is polynomial in the system size and ultimately dominated by the $\mathcal{O}(n^6)$ cost of the SDP relaxation. Finally, note that for many problems (e.g. TSP and MNPP) a trivial lower bound exists ($\ELB = 0$) and thus does not require explicit computation.

\begin{figure}[htbp]
    \centering
        \includegraphics[width=\columnwidth]{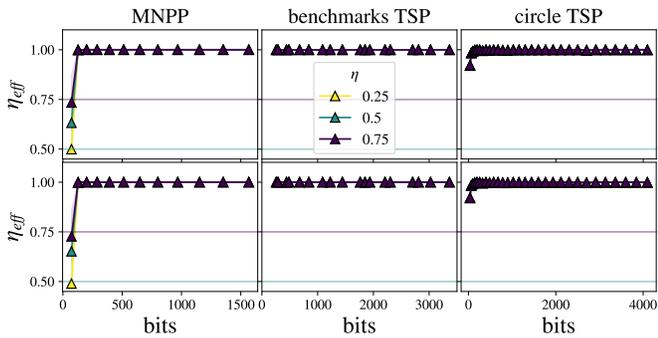} 
    \caption{\emph{Effective success probability $\eta_\mathrm{eff}$ of the Fujitsu Digital Annealer (version 3)} as a function of the system size using penalizations weigths $M^\ast$ determined by Alg.~\ref{alg:M}. 
    Structure of the figure is identical to \cref{fig:eta_eff_GSA} and refer to its captions for details.
    The temperatures of the annealing process here have been automatically selected internally in the Digital Annealer. 
    For \ac{MNPP}, solid lines and markers are averages over $4$ instances, with shaded standard deviation. For \ac{TSP} and circle \ac{TSP} only one instance was considered per system size. For each instance, $512$ solutions were sampled for all problems.
    }
    \label{fig:eta_eff_DA}
\end{figure}

Note that generally speaking, we are here trading computational effort with the tightness of the bounds on the events and, thus, the degree of approximation of the optimal value of $M$. 

Further results concerning the algorithm are presented in the appendix. 
In particular, a numerically robust modification of the algorithm is described in \cref{app:logspace_implementation}, a brief study of the output $M^*$ dependence on the hyperparameters $E_f$ and $v_\mathrm{cut}$ is reported in \cref{app:output_dependence}, and the applicability of the algorithm to the inverse problem of determining an appropriate solver temperature $\beta$ for a fixed penalty weight $M$ is detailed in \cref{app:inverse_problem}.

\subsection{Validation and numerical benchmarks}
We benchmark the proposed strategy on three classes of constrained problems: \Acf{TSP}, \acf{MNPP} and \acf{PO}. Complete formulations of these optimization problems are given in \cref{app:benchmark_problems}. 
The problems capture distinct  structures of sets of feasible points. 
As solvers we use an ideal Gibbs sampler, a simulated annealing algorithm,
and the Digital Annealer.
All instances are tested following the same scheme: Algorithm~\ref{alg:M} is used to determine  the penalization constant $M$ for $\eta \in\{0.25, 0.5, 0.75\}$ and for three different target temperatures (except for the \ac{DA}, operating at a single, automatically selected temperature). 
We choose $E_f$ such that it is not is impeding the success probability—i.e., so that $\eta_\mathrm{exist}$ does not become  small and render the tests uninformative (see \cref{app:eta_exist}). 
We set $E_f = \alpha n^2$, where $n$ is the number of bits in the instance and choose $\alpha$ accordingly or we set $E_f = \infty$, only enforcing feasibility. 
From multiple runs of the solvers we then estimate the effective success probability $\eta_\mathrm{eff}$, i.e.\ the fraction of observed feasible solutions with energy smaller than $E_f$, per problem class and system size. 

In \cref{fig:eta_eff_GSA} top rows, we display the effective success probability  for an ideal Gibbs sampler and different choices of $E_f$, respectively. 
We observe  that, with one exception, $\eta_\mathrm{eff}$ is consistently above the desired threshold. 
The ideal Gibbs sampler exactly  fulfils the assumptions underlying our method development. 
Its output distribution is fully characterized by a known inverse temperature $\beta$ used to determine $M$. %
The gap between $\eta_\mathrm{eff}$ and $\eta$ for \ac{MNPP} and \ac{TSP}, thus, indicates that the bounds evaluate in Algorithm~\ref{alg:M}  are not tight. 
For \ac{PO}, in constrast, we observe that  $\eta_\mathrm{eff}$ is indeed very close to $\eta$. 

For \ac{MNPP} with $E_f = \alpha n^2$, we encounter the only exception where $\eta_\text{eff} < \eta \in \{0.5, 0.75\}$ for $T=\beta^{-1}=10^6$ and small system sizes (\cref{fig:eta_eff_GSA} second row, first plot, squares).
Here, it is indeed impossible to sample feasible solutions \emph{below} the specified target energy $E_f$ with sufficient probability (see \cref{app:eta_exist} for details). 
The algorithm terminated with $\{ \}$ and the points were obtained 
re-running the algorithm with an $\eta < \eta_\mathrm{exist}$. 
The updated $\eta$ is shown as the the 
horizontal bar. 
We observe that $\eta_\text{eff}$ is consistently above this threshold, confirming the robustness of the algorithm also in this setting.

The lower two rows of \cref{fig:eta_eff_GSA} show the effective success probability $\eta_\mathrm{eff}$ for simulated annealing. 
The simulated annealing uses the same   temperature schedule as the Digital Annealer, when available; for \ac{PO}, the schedule from other instances of the same size was used. 
We mostly observe comparable results to the Gibbs sampler. 
For high temperatures $\eta_\mathrm{eff}$ is found slightly below $\eta$ for \ac{TSP} and \ac{PO}. 
The solver is not guaranteed to have converged to a stationary Gibbs distribution at the end of the annealing and  potentially deviates from our idealized assumptions underlying our method.
Nonetheless, our algorithm determines values of $M$ with a comparable success probability $\eta_\mathrm{eff}$ to the targeted one in this setting. 

For \ac{DA} (version 3) 
we find that $\eta_\text{\rm eff}$ is consistently higher than the targeted $\eta$ in all settings,  \cref{fig:eta_eff_DA}. 
Moreover, all $\eta_\text{\rm eff}$ quickly converges to $1$ in the system sizes. This indicates that our methods  overestimates $M$ for larger system sizes.  
We have already seen in \cref{fig:bigM_prob} that for large systems a slight overestimation of $M$ yields quickly to nearly only observing feasible solutions.  
The output distribution of the \ac{DA} is only qualitatively approximated by a Gibbs distribution with inverse temperature $\beta$ determined in the annealing schedule.  
Our results show the \ac{DA} is further biased towards low-objective values than the Gibbs distribution. 
Thus, our methods determines pratical values for $M$ for Fujitsu's Digital Annealer Unit on instances with over thousand variables ensuring reliable performance of the solver.  

A common approach in practice is to perform a brute-force search for a suitable value for $M$, e.g.\ using binary search.
Starting from an over-estimated value for $M$, the penalization is iteratively halved and the problem is solved until an inacceptable number of unfeasible solutions start appearing.
This method traverses the search space exponentially fast. 
The overall cost of the method, however, depend on the complexity of the individual solver calls.  Depending on the solver, its hardware platform and the problem sizes, the costs can be prohibitive already for a small number of iterations.  
This still motivates using a good estimate for $M$ as the initial value of the binary search, as provided by our method developed here.  
So the practical benefit of the method can be quantified as the number of solver calls that are `saved' when initializing binary search using Algorithm~\ref{alg:M} instead of with a more direct upper bound. 
As we show in \cref{app:M_l1}, a simple efficiently computable $M$ yielding an $\eta$-reformulation of a Gibbs sampler at inverse temperature $\beta$ is given by 
\begin{equation} \label{eq:M_l1}
    M_{\ell_1}(\beta) = \beta^{-1}(n \ln 2 -\ln (1-\eta)) + \norm{Q}_{\ell_1}\,. 
\end{equation}
Note that this definition makes use of the direct upper bound $\norm{Q}_{\ell_1}$ on the objective function that is often used to estimate $M$ for an exact solver.
The number of `saved' solver calls is, thus, computed as $\log_2(M_{\ell_1} / M^\ast)$ and depicted for the benchmarking instances in \cref{fig:iter_saved}.
We observe that across all temperatures studied, our algorithm results in reductions of the number of solver calls, often reducing the runtime by factors of $10$ or more. 
This shows that our method can yield practical advantages in the time to solution. 

\begin{figure*}[htbp]
    \centering
        \includegraphics[width=\linewidth]{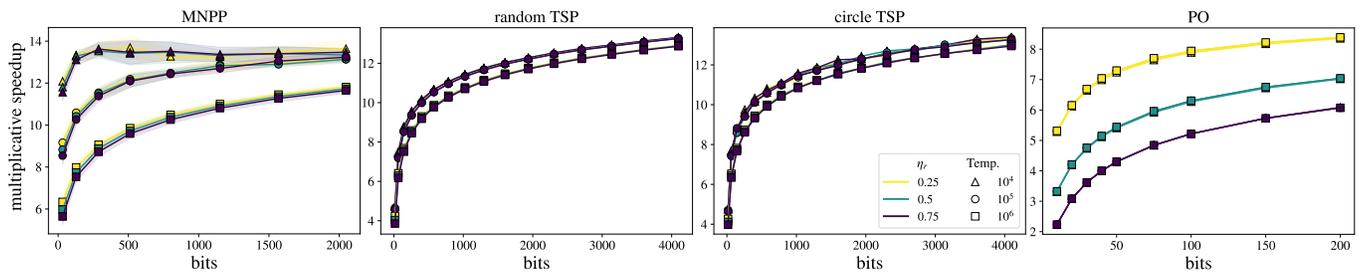}
    \caption{\label{fig:iter_saved}
    \emph{Multiplicative speedup compared to binary search from direct bound,} 
     computed as $\log_2( M_{\ell_1} / M^*)$, 
    as a function of the system size for the different benchmarked problems. Here $M^*$ is the output of Alg.~\ref{alg:M} and $M_{\ell_1}$ is a direct bound for $M$ (see \cref{lem:Ml1}). 
    Colors indicate different target probabilities $\eta_r \in \{0.25, 0.5, 0.75\}$, and marker shapes different temperatures $T = \beta^{-1}$. For \ac{MNPP}, random \ac{TSP} and \ac{PO}, lines and markers are averages over $10$ instances and the standard deviation is shaded. Circle \ac{TSP} defines one instance per system size. 
    As shown in \cref{fig:bigM_prob}, directly using $M_{\ell_1}$ substantially degrades the solution quality. 
    A binary search reducing $M_{\ell_1}$ to $M^\ast$ requires iterations $\log_2( M_{\ell_1} / M^*)$ with repeated calss to the \ac{QUBO} solver.
    Thus, the advantage of using $M^*$ is a reduction in overhead proportional to this factor.
    }
\end{figure*}

\section*{Discussion}

We introduced an efficient algorithm to determine the penalization weight in unconstrained reformulations of constrained optimization problems, specifically tailored for approximate solvers that are qualitatively similar to  Gibbs samplers. 
More precisely, given a Gibbs sampler at an arbitrary inverse temperature, our algorithm returns a penalization weight such that the solver outputs feasible solutions with objective value below a threshold with a controllable, guaranteed minimum probability.
We also demonstrate the practical applicability of our technique beyond exact Gibbs sampling in numerically tests with a simulated annealing algorithm and Fujitsu's Digital Annealer (version 3), for different constrained problem classes and system sizes up to 4098 bits. 
We show that, using our algorithm, one can reduce the time to solution on the solver by an order of magnitude compared to strategies based on binary search for a penalization weight.

Our algorithm for addressing the big-$M$ problem improves on state-of-the-art general heuristics for penalization constants.
It provides a tool to precisely set the penalization constant so as to control the probability of success, using knowledge of the problem structure and the solver’s statistical behavior. 
To this end, we trade resources in the pre-processing for potentially crucial reductions in the solver's runtime.
Such practical methods for addressing the big-$M$ problem are also especially relevant for quantum solvers, e.g. quantum annealers \cite{Kadowaki_QA_98, Farhi_QA_01, Rajak_QA_22, Munoz-Bauz_QA_25} or \ac{QAOA} \cite{DiezValle_Teff_25, Oshiyama_Teff_22, Farhi_QAOA_14, Blekos_QAOA_23}, which typically operate on unconstrained problem formulations. 
Akin to custom hardware solvers as the Fujitsu's Digital Annealer (version 3), classical computational power for the pre-processing is abundant compared to the resource costs of the quantum solver.
Interestingly, recent work started establishing connections between quantum solvers and Gibbs samplers \cite{DiezValle_Teff_25, Caiafa_Teff_06, Benedetti_Teff_16, Oshiyama_Teff_22}.
This work can serve as a starting point for extending our method to quantum solvers in future work, complementing existing approaches \cite{Hadfield2019, Herman2023, DiezValle2023, Bucher2025, Bucher2025_MultiConstraint, Shirai2025, Bako2025progqaoaframework,egginger2026rigorousquantumframeworkinequalityconstrained}.

\section*{Acknowledgements}
M. Krispin and M. Kliesch are funded by 
the Hamburg Quantum Computing project, which is co-financed by the ERDF of the European Union and the Fonds of the Hamburg Ministry of Science, Research, Equalities and Districts (BWFGB); 
and by the Fujitsu Germany GmbH and Dataport as part of the endowed professorship ``Quantum Inspired and Quantum Optimization.''

\input{myacronyms} 
\bibliography{Citations,mk}

\appendix
\onecolumngrid
\clearpage

\section{Extended theoretical guarantee for the algorithm} \label{app:guarantee_realistic}

Theorem~\ref{thm:guarantee} establishes that the proposed algorithm returns a value $M^\ast$ that ensures an $\eta$-reformulation, \cref{eq:eta_reformulation} using infinite samples $N_s \to \infty$. 
We here establish that 
using an inverse polynomial number of samples in the error controls the finite sample error to the bound of the success probability.  
Let $S = \{\vec x_1,\dots, \vec x_{N_s}\}$ be a set of $N_s$ samples drawn uniformly from the feasible subspace $\mathcal F$. We define the empirical estimator $\hat n_\Delta(e) = |\{\vec x \in S : e \le E^{(o)}(\vec x) < e + \Delta\}| \frac{|\mathcal{F}|}{N_s}$, where $|\mathcal{F}| = n_\mathrm{pen}(0)$ denotes the number of feasible points. 
For structured problems, $|\mathcal{F}|$ can often be computed analytically (see \cref{app:penalization_degeneracies}). 
We begin by quantifying the statistical error incurred in replacing $n_\Delta$ with $\hat n_\Delta$ in the following lemma. 
We then show $N_s = \mathcal O (\epsilon^{-2})$ is sufficient to guarantee an  $(\eta-\epsilon)$-reformulation. 
We conclude with a discussion of the scaling of $\Delta$.

\begin{lemma} Let $\hat n_\Delta(e)$ be the estimation of $n_\Delta(e)$ from $N_s$ uniform samples from $|\mathcal F|$. Then 
\begin{equation}
    \mathbb E[\|\hat n_\Delta - n_\Delta\|_{\ell_2}] \leq \frac{|\mathcal F|}{\sqrt {N_s}}\, .
\end{equation}

\end{lemma}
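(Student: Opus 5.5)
We view $\hat n_\Delta$ as a rescaled empirical histogram: an unbiased estimator of the vector $n_\Delta$, indexed by the bins $e\in\Lambda$. We then bound its expected $\ell_2$ error through the second moment and Jensen's inequality.

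\emph{Step 1: setup.} For each bin $e$ let $q_e = n_\Delta(e)/|\mathcal F|$. This is the probability that a uniformly random feasible point has objective in $[e,e+\Delta)$. Since the bins are disjoint, $\sum_e q_e\le 1$. For each sample $\vec x_i$ define the indicator vector $Y_i\in\{0,1\}^{\Lambda}$ with entries $(Y_i)_e = \mathbb 1[e\le E^{(o)}(\vec x_i)<e+\Delta]$. By definition,
\[
\hat n_\Delta = \frac{|\mathcal F|}{N_s}\sum_{i=1}^{N_s} Y_i .
\]
Because $\mathbb E[(Y_i)_e]=q_e$, we get $\mathbb E[\hat n_\Delta]=n_\Delta$ entrywise.

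\emph{Step 2: second moment.} Jensen's inequality gives $\mathbb E\|\hat n_\Delta-n_\Delta\|_{\ell_2}\le (\mathbb E\|\hat n_\Delta-n_\Delta\|_{\ell_2}^2)^{1/2}$. Since the samples are i.i.d., the squared error splits coordinatewise into variances:
\[
\mathbb E\|\hat n_\Delta-n_\Delta\|_{\ell_2}^2=\frac{|\mathcal F|^2}{N_s}\sum_e \mathrm{Var}[(Y_1)_e] = \frac{|\mathcal F|^2}{N_s}\sum_e q_e(1-q_e).
\]

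\emph{Step 3: conclude.} Bound the sum by $\sum_e q_e(1-q_e)\le\sum_e q_e\le 1$. This yields $\mathbb E\|\hat n_\Delta-n_\Delta\|_{\ell_2}\le |\mathcal F|/\sqrt{N_s}$, which is the claim.

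The argument is routine. The one point needing care is that the per-coordinate variances sum to at most one, which holds only because the bins are disjoint, so that each sample falls into at most one bin. Using this avoids the naive estimate $|\Lambda|\cdot\tfrac14$, which would introduce a spurious dependence on the number of bins. We also take $|\mathcal F|$ as exactly known, as stated in the setup, so that it contributes no additional error.
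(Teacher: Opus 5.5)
Your proof is correct and follows the paper's argument: the paper treats the bin counts as a multinomial vector, writes the expected squared error as a sum of per-bin variances equal to $(1-\|p_\Delta\|_{\ell_2}^2)/N_s \le 1/N_s$, applies Jensen's inequality, and rescales by $|\mathcal F|$. Your step $\sum_e q_e(1-q_e)\le\sum_e q_e\le 1$ is the same estimate, and it is slightly more robust because it does not require the bins to exhaust $\mathcal F$.
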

\begin{proof}
    The feasible spectral weight $n_\Delta(e)$ is estimated by drawing $N_s$ i.i.d. samples $\vec x$ uniformly at random from $\mathcal F$ and counting the frequency of observing $E^{(o)}(\vec x) \in [e, e + \Delta)$.  
Thus, the vector-valued random variable $X$ counting the frequencies is multinomially distributed with probability $p_\Delta(e) = n_\Delta(e) / |\mathcal F|$.  
The empirical estimator $\hat p_\Delta = X / N_s$ has expected error 
\begin{equation}
    \mathbb E[\|\hat p_\Delta - p_\Delta\|_{\ell_2}^2] 
    = \frac 1{N_s^2} \mathbb E[ \|X - \mathbb E[X]\|_{\ell_2}^2] 
    = \frac1{N_s^2} \sum_k \operatorname{Var}[X_k] 
    = \frac{1 - \|p_\Delta\|^2_{\ell_2}}{N_s} \leq \frac1N_s \,.
\end{equation} 
where $X_k$ is the $k-$th component of $X$, counting the frequency of the $k-$th bin. Hence, by Jensen's inequality $\mathbb E[\|\hat p_\Delta - p_\Delta \|_{\ell_2}]  \leq 1/\sqrt{N_s}$. 
By definition, the error on $n_\Delta$ is larger by a factor of $|\mathcal F|$. 
\end{proof}

Let's denote by $\hat{\Bfl}$ the estimate of $\Bfl$ using $\hat n_\Delta$ instead of $n_\Delta$.  We can control the error 
as 
\begin{equation}
\begin{split}
    |\hat{\Bfl} - {\Bfl}| &\leq
    \sum_{e \in \LambdaLow} \e^{-\beta (e + \Delta)} | \hat n_\Delta(e + \ELB) -  n_\Delta(e + \ELB)|\, \\
    &\leq 
    \e^{-\beta \Delta} 
    \sqrt{\frac{2 |\mathcal F|^2}{ N_s}} \delta^{-1}
\end{split}
\end{equation}
where we have used Cauchy-Schwarz's inequality, the fact that $\|\hat n_\Delta - n_\Delta\|_{\ell_2} \le \mathbb E[\|\hat n_\Delta - n_\Delta\|_{\ell_2}]/\delta$ with probability at least $(1-\delta)$ from Markov inequality, and that 
\begin{equation}
    \sum_{e \in \LambdaLow} \e^{-2\beta e}
    = \sum_{k=0}^{\lfloor (E_f - \ELB)/\Delta \rfloor} \e^{- 2k \beta \Delta}
    = \frac{1 - \e^{-2 \beta\Delta \lfloor (E_f - \ELB)/\Delta \rfloor}}{1 - \e^{-2\beta \Delta}} \leq 2\,,
\end{equation}
for $\beta \Delta \geq \log(2) / 2$.  
Recalling that we defined $c = \mathcal N_\beta e^{-\beta \ELB}>0$, for $\beta \Delta \geq \log(|\mathcal F|) + \log(c) \geq \log(2) / 2$, we, thus, have that, with probability at least $(1-\delta)$, it holds
\begin{equation}
    |\hat{\Bfl} - {\Bfl}| \leq \sqrt \frac2 {N_s} (c\delta)^{-1}\,.
\end{equation}

Analogously, also for the bound on high-energy feasible states we have:
\begin{equation}
    |\hat{\Bfh} - {\Bfh}| \leq \sqrt \frac2 {N_s} (c\delta)^{-1}\,.
\end{equation}

Thus, we arrive at the following theorem. 
\begin{theorem} \label{thm:finite_sampling}
    Let $\beta \Delta \geq \log(|\mathcal F|) + \log(c) \geq \log(2)/2$.
    For $\epsilon > 0$ and $\delta \ge 0$, suppose that 
    \begin{equation}
        N_s \geq \frac{2}{\epsilon^2 \delta^2}, 
    \end{equation}
    then, with probability at least $1-\delta$, we have $p = \Pr[\{x \in \mathcal F \mid E(x) \leq E_f\}] \geq \eta - \epsilon$.
\end{theorem}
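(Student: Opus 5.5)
The plan is to rerun the chain of inequalities from the proof of Theorem~\ref{thm:guarantee}, now with the estimated quantities $\hat{\Bfl}$ and $\hat{\Bfh}$ used by the algorithm. The additive errors derived just above the statement are then propagated into the final comparison. Fix the realization of the samples and work on the event of probability at least $1-\delta$ on which both
\begin{equation*}
|\hat{\Bfl}-\Bfl|\le \sqrt{2/N_s}\,(c\delta)^{-1}
\quad\text{and}\quad
|\hat{\Bfh}-\Bfh|\le \sqrt{2/N_s}\,(c\delta)^{-1}
\end{equation*}
hold. These two bounds come from the same Markov-inequality event on $\|\hat n_\Delta-n_\Delta\|_{\ell_2}$, since the two index sets $\LambdaLow$ and $\LambdaHigh$ partition the same vector. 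A single event therefore suffices and no union bound is needed; I would state this explicitly. With $N_s\ge 2/(\epsilon^2\delta^2)$, both errors are at most $\epsilon/c$.

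The algorithm returns $M^\ast$ as a root of $\hat g(M)=\Binf(M)+\hat{\Bfh}-\tfrac{1-\eta}{\eta}\hat{\Bfl}$. The exact inequalities \eqref{eq:Binf}, \eqref{eq:Bfl} and \eqref{eq:Bfh} from the proof of Theorem~\ref{thm:guarantee} still hold, because they involve only the true $n_\Delta$. From \eqref{eq:Bfh+Binf} we have $c(\Binf(M^\ast)+\Bfh)\ge 1-p$, and from \eqref{eq:Bfl} we have $c\Bfl\le p$. Substituting the estimates gives
\begin{equation*}
0=\hat g(M^\ast)\ \ge\ \Binf(M^\ast)+\Bfh-\tfrac{\epsilon}{c}-\tfrac{1-\eta}{\eta}\Bigl(\Bfl+\tfrac{\epsilon}{c}\Bigr)
\ \ge\ c^{-1}\Bigl(1-\tfrac{p}{\eta}-\tfrac{\epsilon}{\eta}\Bigr).
\end{equation*}
Here the two error contributions combine as $\epsilon(1+\tfrac{1-\eta}{\eta})=\epsilon/\eta$. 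Rearranging yields $p\ge\eta-\epsilon$.

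The negative-root case is handled as in Theorem~\ref{thm:guarantee}: $M=0$ already works, with the same error bookkeeping. This uses the monotonicity of $\hat g$ in $M$, since only $\Binf$ depends on $M$ and it is decreasing.

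The step I expect to be delicate is keeping the constant $c$ consistent. The error bounds were stated relative to $\Bfl$, which lives on the scale $c^{-1}$. The comparison must therefore be carried out after multiplying by $c$, where the errors become exactly $\epsilon$. I would also double-check that $c$ is evaluated at $M^\ast$ (since $\mathcal N_\beta$ depends on $M$), so that the hypothesis $\beta\Delta\ge\log|\mathcal F|+\log c$ is understood at the returned $M^\ast$. Otherwise the argument is a direct perturbation of the earlier proof.
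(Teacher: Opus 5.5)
Your proposal is correct and is essentially the paper's proof: substitute $\hat{\Bfl},\hat{\Bfh}$ into the root equation $0=\Binf(M^\ast)+\hat{\Bfh}-\tfrac{1-\eta}{\eta}\hat{\Bfl}$, combine the two additive errors into a total loss of $\epsilon/(c\eta)$, and close with \eqref{eq:Bfl} and \eqref{eq:Bfh+Binf} to get $p\ge\eta-\epsilon$. Your extra remarks are sound refinements that the paper leaves implicit: one Markov event controls both estimates, the negative-root case follows from monotonicity of $\hat g$, and $c$ depends on $M$; for the last, $\mathcal N_\beta$ and hence $c$ is nondecreasing in $M$, so stating the hypothesis with $\sup_M c$ removes any dependence on the random $M^\ast$.
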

\begin{proof}
    Analogously to \cref{eq:compound_inequality_proof}, since the algorithm uses the approximated bounds $\hat{\Bfl}$ and $\hat{\Bfh}$ to find a root $M^\ast$, we have 
    \begin{equation}
        0 = \Binf(M^\ast)  + \hat{\Bfh} - \frac{1-\eta}\eta \hat{\Bfl}.
    \end{equation}
    Given that $|\hat{\Bfh} - {\Bfh}| \leq \sqrt \frac2 {N_s} (c\delta)^{-1}$ and $|\hat{\Bfl} - {\Bfl}| \leq \sqrt \frac2 {N_s} (c\delta)^{-1}$ under the theorem assumptions, we have
    \begin{align}
        0 &\ge \Binf(M^\ast)  + \Bfh - \sqrt \frac2 {N_s} (c\delta)^{-1} - \frac{1-\eta}\eta \biggl(\Bfl + \sqrt \frac2 {N_s} (c\delta)^{-1}\biggr) \\
        &=\Binf(M^\ast)  + \Bfh - \frac{1-\eta}\eta \Bfl - (c\delta \eta)^{-1}\sqrt \frac2 {N_s} \\
        &\ge c^{-1}\biggl( 1 - \frac p\eta\biggr) - (c\delta \eta)^{-1}\sqrt \frac2 {N_s} 
        =  c^{-1}\biggl( 1 - \frac {p + \delta^{-1}\sqrt{2/N_s}}\eta\biggr)
    \end{align}
    where we used the inequalities \eqref{eq:Bfl},\eqref{eq:Bfh+Binf}. Since by definition $c>0$, we have $p \ge \eta -  \delta^{-1}\sqrt \frac2 {N_s}$, Thus, taking $N_s \geq \frac{2}{\delta^2 \epsilon^2} $ ensures $p \geq \eta - \epsilon$ with probability at least $(1-\delta)$.
\end{proof}
Notice that choosing a small value for $\Delta$ makes the bound $\Bfl$ tighter, thereby yielding an estimate $M^*$ closer to the optimal penalty $M$. However, as required by the theorem's assumption, $\Delta$ must be sufficiently large to control the deviation between the bound and its empirical approximation $|\hat{\Bfl} - {\Bfl}|$. To quantify this requirement, first note that $\log(|\mathcal F|) \le \log(2^n) \in \mathcal O(n)$. Moreover, $\log(c) = -\beta \ELB - \log(\sum_{\vec x} \e^{-\beta E(\vec x)})$.  Using the bound  $\sum_{\vec x} \e^{-\beta E(\vec x)} \ge 2^n e^{-\beta E_\mathrm{max}}$, where $E_\mathrm{max}$ is the maximal energy (or an upper bound for that), we obtain $\log(c) \le -\beta \ELB + \beta E_\mathrm{max} - \log(2^n)$. In general, this scales as $\mathcal O(\beta\;  \poly(n,m) + n)$, and as $\mathcal O(\beta mn^2 + n)$ when the entries of the problem matrices are bounded by constants (see the complexity discussion in Sec.~\ref{sec:recipe}). Since the theorem requires $\beta \Delta \geq \log(|\mathcal F|) + \log(c)$, it follows that choosing $\Delta \in \Omega(\poly(n,m) + \beta^{-1}n)$ is sufficient to ensure the theorem’s guarantee. For instances with bounded-entries, it suffices to take $\Delta \in \Omega (mn^2+ \beta^{-1}n)$.

\section{Benchmarking problems} \label{app:benchmark_problems}

In the present section, we discuss the benchmarked optimization problems and their QUBO formulations.

\subsection{MNPP}
The Number Partitioning Problem (NPP), an NP-hard combinatorial problem, aims at partitioning a set of numbers into two subsets as evenly as possible. The Multiway Number Partitioning Problem (MNPP) is a generalization of this, with multiple subsets to partition the elements into. 
Its applications range from distributed networking and computing, resource allocation and logistics, to gerrymandering and investment portfolio \cite{Schreiber_18_MNPP}.

More formally, given a set $S$ of $N$ positive numbers $S = \{ c_{1}, \ldots , c_{N}\}$, the goal is to partition of $S$ into $P$ disjoint subsets $R_1, \ldots , R_{P}$, such that the sums of values in each subset are as close to each other as possible. This problem can be stated as follows: can a set of $N$ assets with values $c_1, \ldots , c_{N}$ fairly be distributed between $P$ parties?
To model the problem in an optimization context \cite{Lucas_14_Ising}, we define the $NP$ binary decision variables $\{x_{i, p}\}_{i\in \{1,\dots,N\} , p \in \{1, \dots, P\} } \in \{0,1\}^{NP}$, assigning each element $c_i$ in $S$ to a subset $R_p$, defined so that
\begin{align}
	x_{i, p} = 
    \begin{cases}
	1 & \text{if  } c_i \in R_p \\
	0 & \text{otherwise}
    \end{cases} \, .
\end{align}
The constraints will encode the fact that each element can only be assigned to one subset, that is, the decision matrix $[x]_{i, p}$ is a right-stochastic matrix: $\sum_{p=1}^{P} x_{i, p} = 1 \; \forall i$. The objective function can be stated in different ways \cite{Korf_10_MNPP}; we will consider
\begin{equation}
    E^{(o)}(x) = \sum_{p=1}^{P}  \left( \sum_{i=1}^{N} c_{i} x_{i, p} - \frac{1}{P} \sum_{i=1}^{N} c_{i} \right)^2
\end{equation}
that sums the squared errors of the subsets sums with respect to a perfectly even distribution of $ \frac{1}{P} \sum_{i=1}^{N} c_{i}$ per subset (scaled variance of the subset sums).

The QUBO model can be formulated in the following way:
\begin{align}
	\min_{x \in \{0,1\}^{NP}} E^{(o)}(x) + M \sum_{i=1}^{N}  \left( 1- \sum_{p=1}^{P} x_{i, p} \right)^2 \, ,
\end{align}
In the benchmarks considered in this work, the numbers $c_i$ to be partitioned were randomly generated from a uniform distribution over the interval $[0, 10^3]$. For small-scale tests involving the Gibbs sampler, the number of partitions was fixed to $P = 3$, while only the system size $N$ was increased. Conversely, for the larger-scale tests with the SA and DA solvers, both $N$ and $P$ were increased with system size, maintaining the relation $N = 8P$. This choice ensured that the average number of integers per partition remained constant as the system grew, preventing the problem from becoming artificially easier and avoiding an overabundance of optimal solutions \cite{Gent_NPPtrans_96}.

\subsection{TSP}
The Traveling Salesman Problem is a cornerstone of optimization problems, it is a NP-hard problem highly relevant both in theoretical computer science and for practical applications, such as logistics, circuit design, telecommunications \cite{Matai_10_TSPappl}. Given a connected graph $G=(V,E)$ with $n_v =|V|$ vertices, where edge $e_{i,j}$ represents the cost of traveling from node $i$ to node $j$, the goal is to find the cheapest Hamiltonian cycle, i.e. the path that visits all the nodes in the graph, minimizing the overall total travel cost. The combinatorial problem can be encoded via the $n_v^2$ binary decision variables $\{x_{t, i}\}_{t, i=1,\dots,n_v} \in \{0,1\}^{n_v^2}$, ordering the temporal visit of each city \cite{Lucas_14_Ising}, defined so that
\begin{align}
	x_{t, i} = 
    \begin{cases}
	1 & \text{if city } i \text{ is visited at time step } t \\
	0 & \text{otherwise}.
    \end{cases}
\end{align}
The constraints of the optimization problem enforce the decision matrix $[x]_{t, i}$ to be a permutation matrix,  that is $\sum_{i=1}^{n_v} x_{t, i} = 1 \; \forall t$ and $\sum_{t=1}^{n_v} x_{t, i} = 1 \; \forall i$. The objective function to minimize incorporates the cost (given by the sum of the edge weights) of a path represented by a particular realization of the decision matrix \cite{Lucas_14_Ising}
\begin{equation}
    E^{(o)}(x) = \sum_{t=1}^{n_v}   \sum_{i\ne j=1}^{n_v} e_{i,j} x_{t, i}x_{t+1, j}.
\end{equation}
The QUBO model is then formulated in the following way:
\begin{align}
	\min_{x \in \{0,1\}^{n_v^2}} E^{(o)}(x)  + M \bigg[ \sum_{i=1}^{n_v}  ( 1- \sum_{t=1}^{n_v} x_{t, i} )^2 + \sum_{t=1}^{n_v}  ( 1- \sum_{i=1}^{n_v} x_{t, i} )^2  \bigg]\, .
\end{align}
In the present work, multiple sets of TSP benchmarks are used or generated. The first, referred to as the \emph{circle TSP} in the plots, consists of $n_v$ nodes deterministically positioned at equal distances along a circle of radius $10^6$. The second set, called the \emph{random TSP}, contains instances where nodes are randomly placed within a square of side length $2 \times 10^6$, while the third set, named the \emph{benchmark TSP}, includes instances obtained from the standard benchmark library \cite{TSPbenchmarks}.

\subsection{PO}
For Portfolio Optimization we use the well-known Markovitz model \cite{Markowitz_52_PO, Grant_21_markovitz, Rosenberg_16_risk}, i.e.\ the problem of selecting a set of assets maximizing returns while minimizing risk.
The problem specification requires a vector $\boldsymbol \mu$ of expected returns of a set of $N$ assets, their covariance matrix $\Sigma$, a risk aversion $\gamma > 0$, and a partition number $w$ defining the portfolio discretization.  
Denoting by $x_i$ the units of asset $i$ in the portfolio, the problem formulation reads
\begin{equation}
    \operatorname*{minimize}_{\vec x \in \mathbb N^N}\ -\boldsymbol{\mu}^t \vec x + \gamma\, \mathbf{x}^T \Sigma \mathbf{x} \quad 
\operatorname{subject\,to}\quad  \textstyle\sum_{i=1}^N x_i = 2^w - 1\,. \label{Markowitz_problem}
\end{equation}
The constraint forces the budget to be totally invested. 
The QUBO reduction requires mapping each integer decision variable into $w$ binary variables. 
We generate problem instances from historic financial data on S\&P 500 stocks.

In what follows, we illustrate how the data used in Portfolio Optimization instances were fetched from real data and adapted to Markowitz formulation \eqref{Markowitz_problem}. From stock market index S\&P500, we downloaded the stock price history, referring to the 2 years period December 2020 until November 2022 with one-month interval, of 121 out of the 500 company stocks tracked by S\&P500 (namely, the ones with no missing data in said intervals). Let us call $P_{t,a}$ such cost of an asset $a$, with time index $t$. The return at time step $t$ is defined as
\begin{equation}
    r_{t,a} = \frac{P_{t,a} - P_{t-1,a}}{P_{t-1,a}}
\end{equation}
from which the expected return vector $\tilde{\boldsymbol{\mu}}$ and the covariance matrix $\tilde{\Sigma}$ can be computed as $
    \tilde{\mu}_a = \frac{1}{T}\sum_{t = 1}^T r_{t,a}$ 
and $
    \tilde{\Sigma}_{a,b} = \frac{1}{T-1}\sum_{t = 1}^T (r_{t,a} - \mu_a)(r_{t,b} - \mu_b)$.
We encode the real financial stock market data with decimal precision of $10^{-4}$. 

Another parameter of the generated instances is the partition number $w$ \cite{Grant_21_markovitz}, that describes the granularity of the portfolio discretization, since the budget is divided in $2^w - 1$ equally large chunks. Each asset decision variable $x_i$ is an integer that can take values from $0$ up to $2^w - 1$, indicating how many of these partitions to allocate towards asset $i$. This explains why the constraint $\sum_i x_i = 2^w - 1$ enforces the budget to be totally invested.
As a consequence, $w$ is also equal to the number of bits one needs to allocate for every integer and, by extension, asset.

Notice that $\tilde{\boldsymbol{\mu}}^t \boldsymbol{p}$ is the expected return of a portfolio if $\boldsymbol{p}$ represents the vector of the \emph{portions} of the portfolio for each asset, i.e.\ $0 \le p_i \le 1$ and $\sum_i p_i = 1$. In order to have integer decision variables, the number of chunks $x_i = (2^w-1)p_i$ is used, and in the final formulation (\ref{Markowitz_problem}) of the Markowitz model the factors are absorbed in the objective function, defining $\boldsymbol{\mu} = \tilde{\boldsymbol{\mu}}/(2^w-1)$ and $\Sigma = \tilde{\Sigma}/(2^w-1)^2$.

The last parameter that one needs to set to fully specify the instance is the risk aversion factor $\gamma$, weighting differently the return and the volatility in the objective function. Common values of the risk aversion factor are $\gamma = 0.5,1,2$.

The parameter values used in the instances tested in this work are $\gamma = 1$ and $w = 3$ $(5)$ for the small- (large-) scale tests employing the Gibbs (SA and DA) solvers, respectively, yielding a portfolio granularity of $7$ ($31$) equal chunks.

\section{Details on the algorithm subroutines} \label{app:subroutines}
This section elaborates on the subroutines that constitute Algorithm~\ref{alg:M}.

In line~\ref{alg:line:pen_deg}, we take as input the penalization function $E^{(p)}$ of the problem under consideration and the violation threshold $v_\mathrm{cut}$, which sets the maximum penalization value considered. For the constrained problems analyzed in this work, the penalization degeneracies values $n_\mathrm{pen}(v)$ were analytically derived in \cref{app:penalization_degeneracies} from the structure of $E^{(p)}$. Using these expressions, one can directly construct the vector of degeneracies up to $v_\mathrm{cut}$. For more general problems not considered here, where analytical degeneracies are unavailable, one can instead estimate the penalization degeneracies by evaluating the penalization energies of uniformly sampled bitstrings followed by a fitting procedure.

Line~\ref{alg:line:ELB} computes a lower bound of the objective function $E_\mathrm{LB} \le \min_{\vec x \in \{0,1\}^n} E^{(o)}(\vec x)$. In principle, any valid method to compute such a lower bound can be employed; the tighter the bound, the better the algorithm's performance. In this work, we use a Semi-definite Programming (SDP) relaxation for the \ac{PO} case, where an analytical lower bound is not evident, while we set $E_\mathrm{LB} = 0$ for the other cases, \ac{TSP} and \ac{MNPP}, where this value is an obvious bound (in particular, for \ac{TSP}, where $\min_{\vec x \in \{0,1\}^n} E^{(o)}(\vec x) = 0$). The SDP relaxation consists of optimizing over the cone of semi-definite matrices intersected with linear constraints. Specifically, to lower bound $E^{(o)}(\vec x) = \vec x^t Q \vec x + L^t \vec x$, an SDP relaxation is formulated as
\begin{align}
    E_\mathrm{LB} \coloneqq \min_Y\ &\Tr\{Y^t\tilde{Q}\} \\
    \text{s.t.} \;  &Y \ge 0\\
    &Y_{1i} =Y_{ii} \quad \forall i=2,\dots,n+1 \\
    &  Y_{11} = 1,
\end{align}
where $Y$ is a $(n+1)\times(n+1)$ real positive semidefinite matrix, $ Tr\{A^tB\} = \langle A,B\rangle = \sum_{ij} A_{ij}B_{ij}$ denotes the inner product between matrices and the matrix $\tilde{Q}$ is given by
\begin{equation}
\tilde{Q} = \begin{bmatrix} 
0 & \frac{1}{2}L^t \\
\frac{1}{2}L & Q
\end{bmatrix}.
\end{equation}
For further details on the derivation, see Ref.~\cite{BigM_25}, Appendix B.

In line~\ref{alg:line:Usample}, we compute a pre-defined number $N_s$ of objective energy values $\{E_i\}_{i=1}^{N_s}$ of feasible bitstrings uniformly drawn from $\mathcal F$ and approximate the feasible spectral weights \cref{eq:feasible_spectrum}. For a general constrained problem with constraints of the form $A \vec x = b$, uniformly sampling feasible bitstrings is generally computationally challenging \cite{Jerrum_UnifSampl_86, Jerrum_UnifSampl_03}. Nevertheless, for structured problems, efficient strategies can often be devised \cite{Vigoda_countKnap_14, Pesant_countCSP_15}. In the present work, such strategies were implemented depending on the specific problem structure.
For \ac{MNPP}, each feasible configuration corresponds to an assignment of all $N$ items into $P$ partitions, yielding $P^N$ possible solutions. Uniform sampling is achieved by independently assigning each item $i$ to a random partition $p(i) \in {1,\dots,P}$ and setting $\vec x_{i,p(i)} = 1$ (and $0$ otherwise).
For \ac{TSP}, a feasible bitstring represents a Hamiltonian cycle, which can be uniformly generated by sampling a random permutation $\sigma$ of ${1,\dots,n_v}$ and setting $\vec x_{\sigma(i), i} = 1$ and $0$ otherwise.
For \ac{PO}, each valid portfolio corresponds to one of the $\binom{2^w + N - 2}{N - 1}$ ways of distributing $2^w-1$ indistinguishable units among $N$ assets.
After generating $N_s$ uniform samples, associated objective energies are computed. From this, the computation of $n_\Delta(e)$ follows from the definition \cref{eq:feasible_spectrum}.

Steps~\ref{alg:line:Binf}, \ref{alg:line:Bf_<} and \ref{alg:line:Bf_>} use the computed $n_\Delta(e)$ and $n_\mathrm{pen}(v)$ to evaluate, up to a common factor, the probability bounds for three classes of configurations: infeasible states, feasible states with low objective energy, and feasible states with high objective energy.

Finally, in step~\ref{alg:line:g_root}, the numerical root-finding of $g$ is straightforward, enabled by the function’s favorable analytical properties.

\section{Direct estimation of  penalization weights}\label{app:M_l1}

In order to establish a baseline for our algorithm, we here develop a strategy for determining $M$ based on standard, simple bounds of the objective function to the case of Gibbs samplers.  
This strategy captures typical initial considerations of practitioneers when determining values of $M$ using binary search. 
We provide a simple efficient formula for a penalty weight $M$ that yields an $\eta$-reformulations for an exact Gibbs solver at known temperature.  
The penalty weight is the sum of two terms: First, the penalty weight that is widely used \citep{Harwood_routing, Leonidas_vehiclerouting, qiskit_doc} and sufficient for an exact solver, i.e. a Gibbs solver at inverse temperature $\beta = \infty$. Second, a thermal correction proportional with the temperature of the solver:
\begin{equation} \label{eq:M_l1_app}
    M_{\ell_1}(\beta) = \beta^{-1}(n \ln 2 -\ln (1-\eta)) + \norm{Q}_{\ell_1}
\end{equation}
The expression is motivated by the following guarantee:
\begin{lemma}\label{lem:Ml1}
 For a constrained problem \eqref{lcBQP} of size $n$ and objective function $E^{(o)}(\vec x) = \vec x^t Q \vec x$ a penalty weight $M_{\ell_1}(\beta)$ defined in \eqref{eq:M_l1_app} ensures an $\eta$-formulation \eqref{QUBO} for a Gibbs solver at inverse temperature $\beta$.
\end{lemma}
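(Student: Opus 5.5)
The plan is to compare, inside the Gibbs distribution $p_\beta(\vec x)=\mathcal N_\beta \e^{-\beta E(\vec x)}$, the total weight of infeasible strings against the weight of a single feasible reference point. I read the statement as the feasibility-only guarantee, i.e.\ the case $E_f=\infty$ of Definition~\ref{def:eta-reformulation}; the lemma does not mention $E_f$, so I do not expect a finite threshold to be covered. I assume $\mathcal F\neq\emptyset$. Since $\mathcal N_\beta$ cancels in every ratio, I will work with unnormalized weights $\e^{-\beta E(\vec x)}$ throughout, and I do not need the bounds $\Binf,\Bfl,\Bfh$ from Theorem~\ref{thm:guarantee}.

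\emph{Step 1 (objective oscillation).} First I show that $\max_{\vec x}E^{(o)}(\vec x)-\min_{\vec y}E^{(o)}(\vec y)\le\norm{Q}_{\ell_1}$. For $\vec x\in\{0,1\}^n$ we have $\vec x^tQ\vec x=\sum_{(i,j)\in S_{\vec x}}Q_{ij}$ with $S_{\vec x}=\{(i,j):x_i=x_j=1\}$. Any such sum is at most the sum of the positive entries of $Q$ and at least the sum of the negative entries. Hence the difference between any two values is at most $\sum_{ij}|Q_{ij}|=\norm{Q}_{\ell_1}$.

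\emph{Step 2 (integrality of the penalty).} Because $A$ and $\vec b$ are integral, $E^{(p)}(\vec x)=(A\vec x-\vec b)^2$ is a nonnegative integer. It therefore satisfies $E^{(p)}(\vec x)\ge1$ for every $\vec x\in\overline{\mathcal F}$.

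\emph{Step 3 (comparison with a feasible point).} Fix any $\vec x^\ast\in\mathcal F$. For every infeasible $\vec x$, Steps 1 and 2 give
\begin{equation*}
E(\vec x)\ \ge\ E^{(o)}(\vec x^\ast)-\norm{Q}_{\ell_1}+M .
\end{equation*}
Summing the weights over $\overline{\mathcal F}$ and using $|\overline{\mathcal F}|\le2^n$, I obtain
\begin{equation*}
\Pr[\overline{\mathcal F}]\ \le\ 2^n\,\e^{-\beta(M-\norm{Q}_{\ell_1})}\,p_\beta(\vec x^\ast).
\end{equation*}
Inserting $M=M_{\ell_1}(\beta)$ from \eqref{eq:M_l1_app} turns the prefactor into exactly $2^n\cdot(1-\eta)/2^n=1-\eta$.

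\emph{Step 4 (conclusion).} Since $p_\beta(\vec x^\ast)\le\Pr[\mathcal F]$, Step 3 yields $\Pr[\overline{\mathcal F}]\le(1-\eta)\Pr[\mathcal F]$. Combining this with $\Pr[\mathcal F]+\Pr[\overline{\mathcal F}]=1$ gives
\begin{equation*}
\Pr[\mathcal F]\ \ge\ \frac{1}{2-\eta}\ \ge\ \eta .
\end{equation*}
The last inequality holds for $\eta\in[0,1]$ because $(1-\eta)^2\ge0$.

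I expect no step to be a real obstacle; the only delicate point is Step 1. There one must use the oscillation bound $\norm{Q}_{\ell_1}$, rather than bounding $|E^{(o)}|$ separately at $\vec x$ and at $\vec x^\ast$, which would cost a factor $2$ and fail to match the formula. A second point to keep explicit is the reading of the threshold. If a finite $E_f$ were intended, the same argument would go through only when $E_f\ge\max_{\vec x\in\mathcal F}E^{(o)}(\vec x)$, and I would state that restriction explicitly.
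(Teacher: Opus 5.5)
Your proposal is correct and is essentially the paper's proof. Both arguments bound the infeasible weight using $E^{(p)}\ge 1$ on $\overline{\mathcal F}$, $|\overline{\mathcal F}|\le 2^n$ and the objective minimum, control the normalization through a single feasible reference point, and close with the oscillation bound $\norm{Q}_{\ell_1}\ge E_\mathrm{feas}-E_\mathrm{min}$; like you, the paper only establishes the feasibility-only guarantee. The one cosmetic difference is in the last step: you use $p_\beta(\vec x^\ast)\le\Pr[\mathcal F]$ to obtain $\Pr[\mathcal F]\ge 1/(2-\eta)\ge\eta$, whereas the paper uses $\mathcal N_\beta \e^{-\beta E_\mathrm{feas}}\le 1$ to get $\Pr[\overline{\mathcal F}]\le 1-\eta$ directly.
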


\begin{proof}
    Ensuring that the probability of sampling feasible solutions is greater than $\eta$ is equivalent to show that sampling infeasible solutions occurs with probability at most $1-\eta$, i.e. $\mathbb P[\vec x \notin \mathcal F] \le 1-\eta$. Using the fact that the energy of the \ac{QUBO} formulation will be $E(\vec x) = E^{(o)}(\vec x) + M(A \vec x - b)^2$ and for infeasible points $(A \vec x - b)^2 \ge 1$,  we can bound:
    \begin{equation}
        \mathbb P[\vec x \notin \mathcal F] = \sum_{\vec x \notin \mathcal F} p(\vec x) = \mathcal N_\beta \sum_{\vec x \notin \mathcal F} e^{-\beta  \bigl(E^{(o)}(\vec x) + M(A \vec x - b)^2\bigr)} \le \mathcal N_\beta e^{-\beta M}\sum_{\vec x \notin \mathcal F} e^{-\beta  E^{(o)}(\vec x)},
    \end{equation}
    where $\mathcal N_\beta^{-1} = \sum_{\vec x \in \{0,1\}^n} e^{-\beta E(\vec x)}$ is the normalization constant of the pmf. We then define the minimal objective energy as $E_\mathrm{min} = \min_{\vec x \in \{0,1\}^n} E^{(o)}(\vec x)$ and bound
    \begin{equation}
        \mathbb P[\vec x \notin \mathcal F] \le \mathcal N_\beta e^{-\beta (M+ E_\mathrm{min})} \sum_{\vec x \notin \mathcal F} 1 \le \mathcal N_\beta e^{-\beta (M + E_\mathrm{min})} 2^n.
    \end{equation}
    To bound the normalization constant, suppose we know the energy of a feasible bitstring to be $E(\vec x_\mathrm{feas}) = E_\mathrm{feas}$, which doesn't depend on $M$ since $E(\vec x) = E^{(o)}(\vec x)$ if $\vec x \in \mathcal F$; then, $ \mathcal N _\beta^{-1} = \sum_{\vec x \in \{0,1\}^n} e^{-\beta E(\vec x)} \ge e^{-\beta E_\mathrm{feas}}$ and
    \begin{equation}
        \mathbb P[\vec x \notin \mathcal F] \le e^{-\beta (M + E_\mathrm{min} - E_\mathrm{feas})} 2^n
    \end{equation}
    To ensure that $\mathbb P[\vec x \notin \mathcal F] \le 1-\eta$ it is thus sufficient to pick $M$ such that $e^{-\beta (M + E_\mathrm{min} - E_\mathrm{feas})} 2^n \le 1-\eta$, or, equivalently,
    \begin{equation} \label{eq:thermal_M_our}
         M \ge \beta^{-1} \bigl(n\ln(2) - \ln(1-\eta)\bigr) + E_\mathrm{feas} - E_\mathrm{min}
    \end{equation}
    To prove the claim on $M_{\ell_1}(\beta)$, it is left to show only that $ \norm{Q}_{\ell_1} \ge E_\mathrm{feas} - E_\mathrm{min}$. This is immediate as
    \begin{equation}
        \norm{Q}_{\ell_1} = \sum_{i,j} |Q_{ij}| = \sum_{i,j: Q_{ij} \ge 0} Q_{ij} - \sum_{i,j: Q_{ij} < 0} Q_{ij}
    \end{equation}
    Clearly, $\sum_{i,j: Q_{ij} \ge 0} Q_{ij} \ge \max_{\vec x} \vec x^t Q \vec x \ge E_\mathrm{feas}$; on the other hand, $\sum_{i,j: Q_{ij} < 0} Q_{ij} \le \min_{\vec x} \vec x^t Q \vec x = E_\mathrm{min}$. This shows that $\norm{Q}_{\ell_1} \ge E_\mathrm{feas}- E_\mathrm{min}$ and proves the lemma.
\end{proof}%
Notice that the fact that $ \norm{Q}_{\ell_1} \ge E_\mathrm{feas} - E_\mathrm{min}$ is related to the advantage for exact solvers in using the more conservative weight $M_\mathrm{SDP} = E_\mathrm{feas} - E_\mathrm{SDP}$, rather than $M_{\ell_1} = M_{\ell_1}(\beta = \infty) = \norm{Q}_{\ell_1}$ as shown in Ref.~\cite{BigM_25}, where $E_\mathrm{SDP}$ is a lower bound of $E_\mathrm{min}$.

\section{Triviality and existence of the solution} \label{app:eta_exist}
Sampling feasible solutions with a maximal allowed energy, with probability at least $\eta$ and using a Gibbs solver at temperature $\beta$, can range from extremely difficult to trivial. In extreme cases, the task may even become  infeasible. In this section, we illustrate how the algorithm handles these two opposite cases.

For a fixed $\beta$, there is an upper bound on the sampling probability of feasible solutions with energy not exceeding $E_f$. 
This bound corresponds to the probability of sampling such solutions assuming that only feasible points could be drawn, that is, in the limit of very large $M$ where unfeasible configurations are fully suppressed. 
Formally, this upper limit is the cumulative distribution function of the Gibbs measure restricted to the feasible subspace, evaluated at $E_f$.
Recall that in Algorithm~\ref{alg:M} we determine $M$ from three bounds using 
\begin{equation} \label{eq:master_function}
    g(M) = \Binf(M) + \Bfh - \frac{1-\eta}{\eta}\Bfl
\end{equation}
The function $g$ is monotonically decreasing in $M$, and it is constructed so that its zeros (and the region where $g(M) \le 0$) correspond to values of $M$ satisfying the requirement. The maximal $\eta$ ensuring the existence of a guaranteed sampling success, which we denote as $\eta_\mathrm{exist}$, occurs when $g(M)$ tends to zero only asymptotically. Since $g(M) \stackrel{M \to \infty}{\longrightarrow} \Bfh - \frac{1-\eta}{\eta}\Bfl$, then by setting this limit to zero we obtain
\begin{equation} \label{eq:eta_ex_opt}
    \eta_\mathrm{exist}  =  \frac{\Bfl }{\Bfl + \Bfh}.
\end{equation}
In the algorithm, if the required sampling success satisfies $\eta \ge \eta_\mathrm{exist}$, then the requirement is unattainable: $g$ has no roots and the algorithm returns $\{\}$. The required sampling success can then be reduced to $\eta_\mathrm{exist} - \epsilon$ for a small $\epsilon$ and the algorithm rerun to obtain an attainable solution. 
Conversely, if the requirement is already met even with no penalty weight ($M=0$), the problem is considered trivial: the root of $g$ will be negative and therefore returning $\max \{0, M^\ast\}$ will ensure a positive and sufficient penalty weight.

\section{Robust implementation in logspace} \label{app:logspace_implementation}
A naive numerical implementation of the algorithm may suffer from numerical instabilities or overflow, primarily due to the exponentials involved in the computations of the terms $\Binf$, $\Bfl$ and $\Bfh$. 
To prevent this, we implement a numerically stable formulation,  described in this section.

Let us first take the function  \eqref{eq:master_function}, and recall that $\Binf(M) = \sum_{v=1}^{v_\mathrm{cut}} e^{-\beta M v} n_{\mathrm{pen}}(v)$, $\Bfh = \sum_{e \in \LambdaHigh} \e^{-\beta e} n_\Delta(e + \ELB)$, $\Bfl = \sum_{e \in \LambdaLow} \e^{-\beta (e+\Delta)} n_\Delta(e + \ELB)$ and set $c = \frac{1 - \eta}{\eta}$. 
We can alternatively define the function
\begin{equation}
    G(M) = \log(\Binf(M) + \Bfh) - \log\left(\frac{1-\eta}{\eta}\right) - \log(\Bfl).
\end{equation}
Note that the functions $g(M)$ and $G(M)$ share the same sign and root. 
hence, $G(M)$ can be used in place of $g(M)$ in the Algorithm~\ref{alg:M} yielding logarithmically smaller values in the evaluation. 
To compute the terms of $G(M)$ efficiently and stably, we introduce the LogSumExp (LSE) function \cite{Boyd2009convexoptimization}, defined as $\mathrm{LSE}(x_1,\dots,x_n) = \log\big(\sum_{i=1}^n e^{x_i}\big)$, which is commonly used to avoid underflow and overflow, among other things. 
Using the LSE, we compute the terms of $G$ as
\begin{align}
    \log(\Binf(M)) &= \mathrm{LSE}[\log(n_{\mathrm{pen}}(v)) - \beta Mv]_{v=1}^{v_\mathrm{cut}}\,, \\
    \log(\Bfh) &= \mathrm{LSE}[\log(n_\Delta(e + \ELB)) - \beta e]_{e \in \LambdaHigh}\,, \\
    \log(\Binf(M)  + \Bfh) &= \mathrm{LSE}[\log(\Binf(M)), \log(\Bfh)]\,, \\
    \log(\Bfl) &= \mathrm{LSE}[\log(n_\Delta(e + \ELB)) - \beta (e+\Delta)]_{e \in \LambdaLow}\,.
\end{align}

Using the log-domain function $G(M)$ instead of $g(M)$ also affects the computation of the existence threshold discussed in \cref{app:eta_exist}. 
By imposing $G(\infty) = 0$, one can compute $\eta_\mathrm{exist} = (1 + e^{\gamma})^{-1}$,
with $\gamma = \log(\Bfh) - \log(\Bfl)$, properly evaluated using the LSE function to ensure numerical stability.

\section{Dependence of the Algorithm's output on the input parameters ($v_\mathrm{cut}$ and $E_f$)} \label{app:output_dependence}
\begin{figure*}[tb]
    \centering
        \includegraphics[width=.8\linewidth]{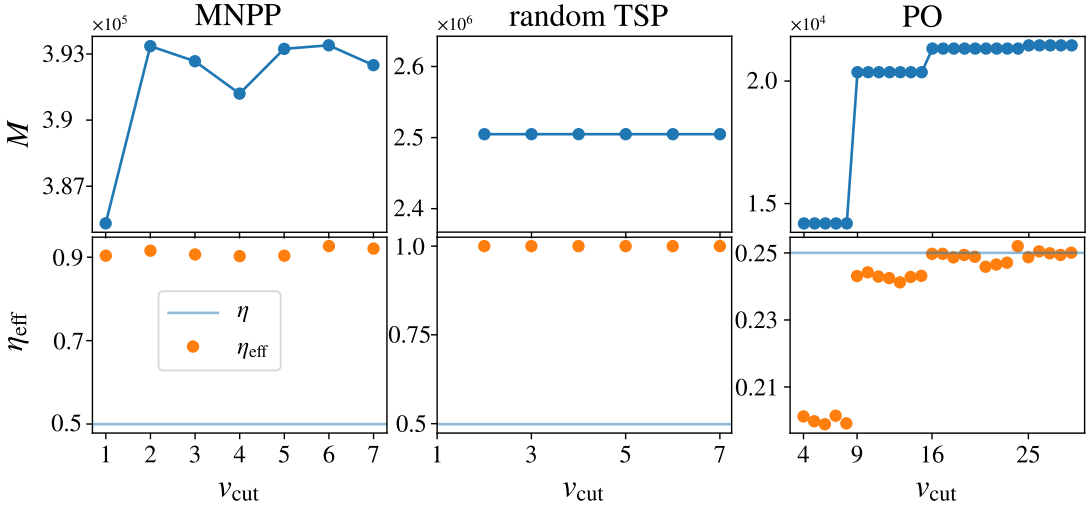}
    \caption{\emph{Output dependence on $v_\mathrm{cut}$.}     
     Algorithm output $M^*$ (top) and associated effective success probability $\eta_\mathrm{eff}$ (bottom) as functions of the hyperparameter $v_\mathrm{cut}$ for the tested benchmarks, shown for small system sizes. In the bottom plots, the thin blue line denotes the required probability $\eta$ and orange markers the effective success $\eta_\mathrm{eff}$ measured from an ideal Gibbs sampler. For \ac{MNPP} and \ac{TSP}, $M^*$ stabilizes already at small $v_\mathrm{cut}$, with $\eta_\mathrm{eff} > \eta$ in the corresponding plots—indicating that small $v_\mathrm{cut}$ values capture most infeasible samples and ensure a robust guarantee that is respected. In contrast, for \ac{PO} instances, values $v_\mathrm{cut} \ge 16$ are required to capture all relevant infeasible configurations, stabilize $M^*$, and and achieve an $\eta_\mathrm{eff}$ close to the target $\eta$. 
     The required probability $\eta$ has been set to $0.5$ ($0.25$ for \ac{PO}), the inverse temperature to $\beta = 10^{-5}$ and the problem-specific parameters to $N,P = 4,3$ (\ac{MNPP}), $n_c = 4$ (\ac{TSP}) and $N,w = 4,3$ (\ac{PO}).
     }
    \label{fig:vcut_dependence}
\end{figure*}

\begin{figure*}[tb]
    \centering
        \includegraphics[width=.6\columnwidth]{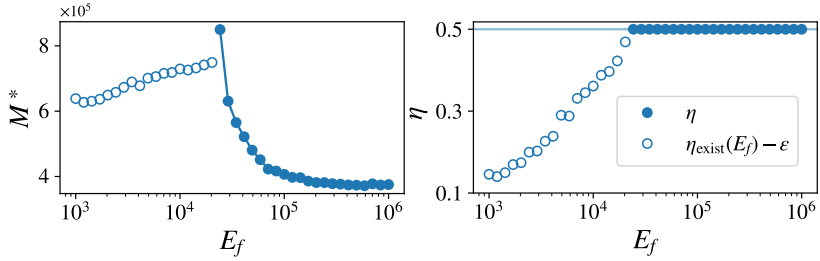}
    \caption{\emph{Output dependence on $E_f$.}
    Algorithm output $M^*$ (left) and success probability used for the output computation $\eta$ (right) as functions of the input parameter $E_f$, for a small instance of \ac{MNPP} ($N,P = 9,2$), illustrating the parameter's effect. Filled blue markers denote the region where the required probability $\eta$ can be satisfied and thus the output $M^\ast$ is computed using the required $\eta$, while empty markers denote the region where the requirement becomes unattainable, and the guarantee is thus reduced to $\eta = \eta_\mathrm{exist}(E_f) - \epsilon$, where $\eta_\mathrm{exist}$ is the maximal probability for which the existence of a solution $M$ is ensured
    (see \cref{app:eta_exist}). The required probability is set to $\eta =0.5$, the inverse temperature to $\beta = 10^{-5}$ and $\epsilon = 0.01$.
    }
    \label{fig:Ef_dependence}
\end{figure*}
This section provides a concise analysis of the sensitivity of the algorithm’s output to choices of the input parameters $E_f$, the maximal energy of desired solutions, and $v_\mathrm{cut}$, the maximal value of the penalization term considered for infeasible points. 
\cref{fig:vcut_dependence} illustrates, for small instances ($n_\mathrm{bits} \approx 16$) across all benchmarked problems, the behavior of the returned weight $M^*$ and its associated required success probability $\eta$, compared with the effective success probability $\eta_\mathrm{eff}$ of an ideal Gibbs sampler. 
For the \ac{MNPP} and \ac{TSP} instances, small values of $v_\mathrm{cut}$ do not change  $M^*$ accounting for most of the probability mass of infeasible points, as indicated by $\eta_\mathrm{eff} \ge \eta$, confirming the robustness of the guarantee. 
In contrast, for \ac{PO}, values $v_\mathrm{cut} < 16$ fail to capture all relevant infeasible configurations. 
Here $M^*$ increases stepwise (because $n_\mathrm{pen}(v)$ is non-zero only for perfect squares $v$) and $\eta_\mathrm{eff} < \eta$, indicating an unfulfilled guarantee. 
Only for $v_\mathrm{cut} \ge 16$, $M^*$ converges and the effective success probablility is close to its target. 
Similar behaviour was observed for larger problem sizes. Accordingly, $v_\mathrm{cut}$ was fixed to $4$ for \ac{MNPP} and \ac{TSP}, and to $16$ for \ac{PO} instances.

\cref{fig:Ef_dependence} shows the dependence of $M^*$ and $\eta$ on the parameter $E_f$ for an exemplary \ac{MNPP} instance with $18$ bits. 
As $E_f$ decreases, $M^*$ increases to ensure a success probability of $\eta = 0.5$, as long as this target remains  achievable (filled markers). 
For sufficiently small $E_f$, however, the requirement becomes unattainable. 
In this regime, the algorithm is re-run with $\eta = \eta_\mathrm{exist} - \epsilon$, where $\eta_\mathrm{exist}$ is the maximal success probability for which the existence of a solution $M$ is ensured; see \cref{app:eta_exist} for details.
When this happens (empty markers), both the maximal attainable success $\eta_\mathrm{exist}$ and $\eta$ alongside, decrease as $E_f$ is lowered, as expected.

\section{Solving the inverse problem as a byproduct: from $M$ to $\beta$} \label{app:inverse_problem}
A convenient extension of the algorithms introduced in this work arises in the inverse problem to the Big-$M$ problem: determine the inverse temperature $\beta$ that allows a thermal solver to sample desired solutions with probability at least $\eta$, when the penalty weight $M$ is fixed. 
Such a situation may occur when increasing $M$ is not possible due to interaction strengths, hardware, or software limitations, while the solver temperature can still be adjusted to meet the sampling requirement. 
This inverse problem can be solved with the same strategy introduced in this work (Sec.~\ref{sec:recipe}). 
To this end, we modify Algorithm~\ref{alg:M} line~\ref{alg:line:g_root} to calculate an optimal inverse temperature $\beta^\ast$ as the root of $\beta \mapsto g(M)$ for given $M$.

\section{Penalization degeneracy} \label{app:penalization_degeneracies}
In this section, we present the analytical results on the penalization degeneracy $n_\mathrm{pen}(v) = |\{\vec x \in \{0,1\}^n : E^{(p)}(\vec x) = v\}|$ for the constrained optimization problems considered in this work. 
For the benchmarked problems \ac{MNPP}, \ac{TSP} and \ac{PO}, the penalization degeneracy can be obtained analytically through combinatorial counting methods. The following subsections detail these analytical expressions 
for each problem. 
Note that alternatively one can also estimate $n_\mathrm{pen}(v)$ for $v \in \{1,\dots,v_\mathrm{cut}\}$ by uniformly sampling bitstrings $\vec x \in \{0,1\}^n$, evaluating their penalization energy $E^{(p)}(\vec x)$ and then fitting or inferring the corresponding values. 

Importantly, the analytical argument used to derive the penalization degeneracies $n_\mathrm{pen}(v)$, suggests that, for any fixed value of $v$, computing $n_{\mathrm{pen}}(v)$ takes at most polynomial time and memory in $n$. 
For the benchmark problems, \ac{MNPP}, \ac{TSP} and \ac{PO}, the entries of the constraint matrix $A$ are upper bounded by $1$. Consequently, as stated in the main text, the penalization term is bounded 
$v_\mathrm{max} = \max_\vec x E^{(p)}(\vec x) \in \mathcal O (mn^2)$. 
Therefore, evaluating $n_{\mathrm{pen}}(v)$ for all $v$ also takes overall polynomial time and memory in the problem size. 
Expressed in terms of the parameters of each problem, $v_\mathrm{max}$ reads as follows: for \ac{TSP}, we have $m = n_v$, the number of vertices, and $n = n_v^2$, hence $v_\mathrm{max} \in \mathcal O (n_v^5)$; for \ac{MNPP} ($N$ numbers in $P$ partitions), we have $m = P$ and $n = NP$, hence $v_\mathrm{max} \in \mathcal O (N^2P^3)$; 
for \ac{PO}, we have $m = 1$ and $n = wN$, the resolution $w$ times number of stocks $N$,  hence $v_\mathrm{max} \in \mathcal O (w^2N^2)$.

In \cref{fig:degeneracies} we illustrate the results of the remainder of this appendix.  
We show the penalization degeneracies computed analytically for small values of the penalization at a fixed system size. 
We observe that $n_\text{\rm pen}$ scales sub-exponentially for all three problems.  
This indicates that one can often introduce rather small 
values of the cut-off $v_\text{\rm cut}$ until which $n_\text{\rm pen}$ is computed in our Algorithm~\ref{alg:M}.

\subsection{MNPP}
For a Multiway Number Partition Problem (MNPP) instance with $N$ numbers to partition into $P$ partitions the penalization $E^{(p)}(\vec x) = \sum_{i=1}^{N}  ( 1- \sum_{p=1}^{P} x_{i, p} )^2$ enforces a feasible decision matrix $[\vec x]_{i,p}$ to be right-stochastic, i.e., $\sum_{p=1}^P x_{i,p} = 1 \; \forall i$, meaning that each row of the binary matrix contains exactly one $1$. 
Note that in the penalization term the rows contribute independently to the total error. 
Therefore, the number of feasible bitstrings ($v=0$) is $n_\mathrm{pen}(0)=P^N$, since there are $P$ valid choices per row and $N$ rows. 
Similarly, the number of bitstrings with $v=1$ can be counted by considering the number of configurations with exactly one incorrect row and multiplying by the number of ways this row can contain either $0$ or $2$ ones instead of $1$, thus, giving $E^{(p)}(\vec x) =1$. 
The same reasoning extends to higher values ($v = 2,3$).  
For $v = 4,\dots,7$ one must also include rows with $3$ ones instead of $1$, which contribute a factor of four to $E^{(p)}(\vec x)$. 
In this way, the following expressions are obtained:

\begin{align}
    n_\mathrm{pen}(0) &= P^N \:\:\:\text{ feasible bitstrings}\\
    n_\mathrm{pen}(v) &= P^{N-v} \binom{N}{v} \biggl(1 + \binom{P}{2}\biggr)^v \;\; \forall v \in \{1,\cdots, 3\} \\
    n_\mathrm{pen}(v) &= P^{N-v} \binom{N}{v} \biggl(1 + \binom{P}{2}\biggr)^v + (v-3)\binom{N}{v-3}P^{N-(v-3)}\binom{P}{3}\biggl(1 + \binom{P}{2}\biggr)^{v-4}  \;\; \forall v \in \{4,\cdots, 7\}
\end{align}

\subsection{TSP}
For a Travelling Salesman Problem instance with $n_v$ vertices, the penalization $E^{(p)}(\vec x) = \sum_{i=1}^{n_v}  ( 1- \sum_{t=1}^{n_v} x_{t, i} )^2 + \sum_{t=1}^{n_v}  ( 1- \sum_{i=1}^{n_v} x_{t, i} )^2$ enforces a feasible decision matrix $[\vec x]_{t,i}$ to be stochastic, meaning that each row \emph{and} each column must contain exactly a single $1$. The number of feasible points thus corresponds to the number of permutation matrices, hence $n_\mathrm{pen}(0) = n_v( n_v-1)\dots 1 =  n_v!$. Unlike the \ac{MNPP} case, here row and column errors are coupled (e.g., flipping a bit in a valid configuration introduces an error in both a row and a column), so their contributions to the total penalty are not independent. 
Consequently, no configurations exist with $v=1$. 
Indeed, having a total number of ones different from $n_v$ inevitably produces at least one faulty row (with zero or two ones) \emph{and} one faulty column, producing a minimum of $v=2$. 
Conversely, if the matrix contains exactly $n_v$ ones, faulty rows (and columns) must occur in pairs, so $v$ is always even. 
Similarly, all odd-valued violations considered (up to $v=7$) yield no configuration. 
The coupling between rows and columns rapidly complicates the combinatorial counting as $v$ increases. 
To determine $n_\mathrm{pen}(v)$, one must first identifies all distinct patterns in which a given $v$ can occur, and then count the configurations realizing each pattern. 
For instance, for $v=2$ we must account for several distinct scenarios: matrices with two faulty rows and no faulty columns, yielding 
$n_v! \binom{n_v}{2}$ configurations; matrices with two faulty columns and no faulty rows (the same number by symmetry); matrices with one faulty row and one faulty column containing $n_v-1$ ones in total (rather than $n_v$ like a valid configuration), that are 
$n_v! n_v$; 
matrices with one faulty row and one faulty column with $n_v+1$ ones in total and a $1$ at the intersection of the faults, giving 
$n_v! 2\binom{n_v}{2}$; and finally, matrices with one faulty row and one faulty column containing $n_v+1$ ones in total but a $0$ at the intersection of the faults, resulting in 
$n_v! \frac{3}{2}\binom{n_v}{3}$. 
Analogous decompositions, based on the number of ones and the distribution of horizontal or vertical faults, were performed to avoid overcounting and to compute degeneracies up to $v = 6$:

\begin{align}
    n_\mathrm{pen}(0) &= n_v! \:\:\:\text{ feasible bitstrings}\\
    n_\mathrm{pen}(1) &= 0 \\
    n_\mathrm{pen}(2) &= n_v!  \biggl[n_v + 4\binom{n_v}{2} + \frac{3}{2}\binom{n_v}{3} \biggr]  \\
    n_\mathrm{pen}(3) &= 0 \\
    n_\mathrm{pen}(4) &= n_v! \biggl[\binom{n_v}{2}+21\binom{n_v}{3}+57\binom{n_v}{4}+45\binom{n_v}{5}+\frac{45}{4}\binom{n_v}{6}\biggr] \\
    n_\mathrm{pen}(5) &= 0 \\
    n_\mathrm{pen}(6) &= 5n_v!\biggl[\frac{47}{15}\binom{n_v}{3}+ 24\binom{n_v}{4}+ 137\binom{n_v}{5}+ 1157\binom{n_v}{6}+ \frac{567}{4}\binom{n_v}{7}+ 126\binom{n_v}{8}+ \frac{63}{2}\binom{n_v}{9}\biggr] \\
    n_\mathrm{pen}(7) &= 0
\end{align}

\subsection{PO}
For a Portfolio Optimization instance with $N$ stocks and partition number $w$, the penalization energy is $E^{(p)}(\vec x) = ( \sum_{i=1}^N x_i - 2^w + 1 )^2$, where each component is an integer $x_i \in \{0,\dots,2^w-1\}$. The penalization is zero only for bitstrings $\vec x$ whose components sum up to $2^w-1$. Since $E^{(p)}$ is the square of an integer, non-zero configurations exist only for perfect-square values $v=k^2$. 
The number of feasible configurations ($v=0$) can be computed as the number of ways to distribute $2^w-1$ indistinguishable `coins' (portfolio fragments) among $N$ distinguishable `boxes' (stocks) by standard combinatorial arguments. 
For infeasible bitstrings, a similar reasoning applies: if the configuration components sum up to $2^w-1 \pm k$, then $v=k^2$. 
The corresponding number of configurations is the sum of two terms: the ways to distribute $2^w-1 -k$ (indistinguishable) coins among $N$ (distinguishable) boxes, and the ways to distribute $2^w-1+k$ coins among $N$ boxes, constrained so that no box can contain more than $2^w-1$ coins, since each $x_i$ is bounded by that value. 
This procedure yields the following penalization degeneracies:

\begin{align}
    n_\mathrm{pen}(0) &= \binom{2^w + N - 2}{N-1} \:\:\:\text{ feasible bitstrings}\\
    n_\mathrm{pen}(1) &= \binom{2^w + N - 3}{N-1} +\binom{2^w + N - 1}{N-1} - N \\
    n_\mathrm{pen}(4) &= \binom{2^w + N - 4}{N-1} +\binom{2^w + N}{N-1} - N^2\\
    n_\mathrm{pen}(9) &= \binom{2^w + N - 5}{N-1} +\binom{2^w + N + 1}{N-1} - N^2\frac{N+1}{2} \; \textit{ for } w \ge 2\\
    n_\mathrm{pen}(16) &= \binom{2^w + N - 6}{N-1} +\binom{2^w + N + 2}{N-1} - N^2\frac{N+1}{2}\frac{N+2}{3} \; \textit{ for } w \ge 2\\
    n_\mathrm{pen}(25) &= \binom{2^w + N - 7}{N-1} +\binom{2^w + N + 3}{N-1} - N^2\frac{N+1}{2}\frac{N+2}{3}\frac{N+3}{4} \; \textit{ for } w \ge 3\\
    n_{pen}(v) &= 0 \; \forall v :  \nexists k \in \mathbb{N} : v = k^2 
\end{align}

\begin{figure*}
    \centering
        \includegraphics[width=\linewidth]{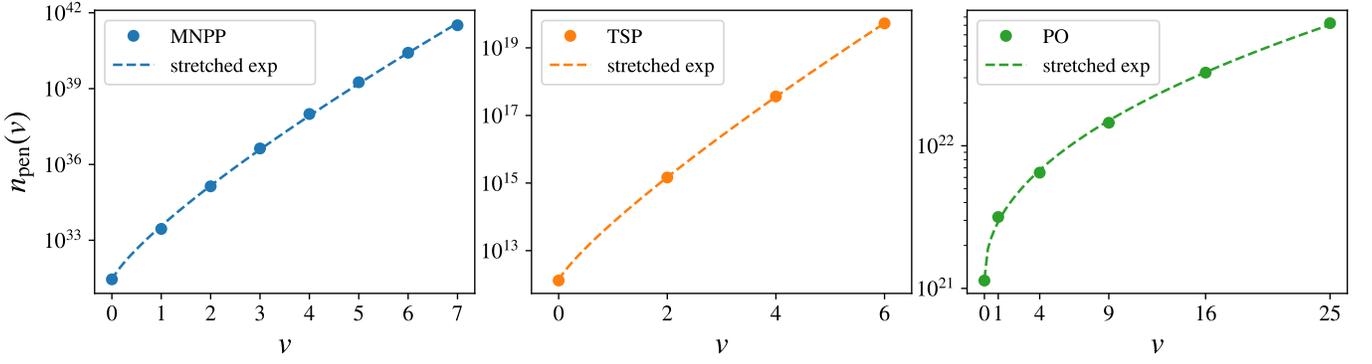}
    \caption{\emph{Penalization degeneracies for the problems studied and stretched exponential fit.}  $n_\mathrm{pen}(v)$ is shown as a function of the violation $v$ on a logarithmic scale for the benchmarked problems (from left to right, \ac{MNPP}, \ac{TSP} and \ac{PO}) at a fixed problem size. The problem parameters were chosen such that each instance requires $255$ bits; specifically, for \ac{MNPP} $N=45$, $P=5$, for \ac{TSP} $n_v = 25$ and for \ac{PO} $N=45$, $w=5$. Dots represent the analytical values reported in \cref{app:penalization_degeneracies}, while dashed lines show a stretched exponential fit, defined as $n_\mathrm{pen}(v) = \exp(a + bx^k)$. Note that for $k =1$ the stretched exponential reduces to a standard exponential, whereas $k < 1$ corresponds to sub-exponential growth. The fit closely follows the data, with fitted exponents for the three cases (from left to right) given by $k_\mathrm{MNPP} = 0.79 \pm 0.05$, $k_\mathrm{TSP} = 0.83 \pm 0.02$ and $k_\mathrm{PO} = 0.46 \pm 0.05$, where the errors correspond to three standard deviations. In all cases, $n_\mathrm{pen}(v)$  exhibits sub-exponential scaling with $v$. 
    }
    \label{fig:degeneracies}
\end{figure*}

\end{document}

%% file: myacronyms.tex
\begin{acronym}[POVM]
\acro{ACES}{averaged circuit eigenvalue sampling}
\acro{AGF}{average gate fidelity}
\acro{AP}{Arbeitspaket}
\acro{ASS}{algorithmic shadow spectroscopy}

\acro{BK}{Bravyi-Kitaev}
\acro{BOG}{binned outcome generation}

\acro{CFG}{context-free grammar}
\acro{CFL}{context-free language}
\acro{CNF}{conjunctive normal form}
\acro{CP}{completely positive}
\acro{CPT}{completely positive and trace preserving}
\acro{cs}{computer science}
\acro{CS}{compressed sensing} 
\acro{ctrl-VQE}{ctrl-VQE}

\acro{DA}{Digital Annealer}
\acro{DAU}{Digital Annealer Unit}
\acro{DAQC}{digital-analog quantum computing}
\acro{DD}{dynamical decoupling}
\acro{DFA}{deterministic finite automaton}
\acrodefplural{DFA}{deterministic finite automata}
\acro{DFE}{direct fidelity estimation} 
\acro{DFT}{discrete Fourier transform}
\acro{DL}{deep learning}
\acro{DM}{dark matter}

\acro{FFT}{fast Fourier transform}

\acro{GS}{ground-state}
\acro{GST}{gate set tomography}
\acro{GTM}{gate-independent, time-stationary, Markovian}
\acro{GUE}{Gaussian unitary ensemble}

\acro{HOG}{heavy outcome generation}

\acro{irrep}{irreducible representation}

\acro{LBA}{linear bounded automaton}
\acrodefplural{LBA}{linear bounded automata}
\acro{LCBO}{linearly constrained binary optimization}
\acro{LDPC}{low density partity check}
\acro{LP}{linear program}

\acro{MAGIC}{magnetic gradient induced coupling}
\acro{MAX-SAT}{maximum satisfiability}
\acro{MBL}{many-body localization}
\acro{MIP}{mixed integer program}
\acro{ML}{machine learning}
\acro{MLE}{maximum likelihood estimation}
\acro{MMP}{matrix mortality problem}
\acro{MNPP}{Multiway Number Partitioning Problem}
\acro{MPO}{matrix product operator}
\acro{MPS}{matrix product state}
\acro{MS}{M{\o}lmer-S{\o}rensen}
\acro{MSE}{mean squared error}
\acro{MTM}{Multi-tape Turing machine}
\acro{MUBs}{mutually unbiased bases} 
\acro{mw}{micro wave}

\acro{NFA}{non-deterministic finite automaton}
\acrodefplural{NFA}{non-deterministic finite automata}
\acro{NISQ}{noisy and intermediate scale quantum}
\acro{NTM}{non-deterministic Turing machine}
\acro{PTM}{probabilistic Turing machine}

\acro{ONB}{orthonormal basis}
\acroplural{ONB}[ONBs]{orthonormal bases}

\acro{PCP}{Post correspondence problem}
\acro{PDA}{pushdown automaton}
\acro{PO}{Portfolio Optimization}
\acro{POVM}{positive operator valued measure}
\acro{PQC}{parametrized quantum circuit}
\acro{PSD}{positive-semidefinite}
\acro{PSR}{parameter shift rule}
\acro{PVM}{projector-valued measure}

\acro{QAOA}{Quantum Approximate Optimization Algorithm}
\acro{QBF}{quantified Boolean formula}
\acro{QC}{quantum computation}
\acro{QEC}{quantum error correction}
\acro{QFT}{quantum Fourier transform}
\acro{QM}{quantum mechanics}
\acro{QML}{quantum machine learning}
\acro{QMT}{measurement tomography}
\acro{QPT}{quantum process tomography}
\acro{QPU}{quantum processing unit}
\acro{QUBO}{Quadratic Unconstrained Binary Optimization}
\acro{QWC}{qubit-wise commutativity}
\acro{LCBO}{linearly constrained binary optimization}
\acro{RB}{randomized benchmarking}
\acro{RBM}{restricted Boltzmann machine}
\acro{RDM}{reduced density matrix}
\acro{rf}{radio frequency}
\acro{RIC}{restricted isometry constant}
\acro{RIP}{restricted isometry property}
\acro{RMSE}{root mean squared error}
\acro{ROBP}{read once branching program}

\acro{SA}{simulated annealing}
\acro{SDP}{semidefinite program}
\acro{SFE}{shadow fidelity estimation}
\acro{SIC}{symmetric, informationally complete}
\acro{SPAM}{state preparation and measurement}
\acro{SPSA}{simultaneous perturbation stochastic approximation}

\acro{TM}{Turing machine}
\acro{TSP}{Traveling Salesman Problem}
\acro{TT}{tensor train}
\acro{TV}{total variation}

\acro{VC}{vertex cover problem}
\acro{VQA}{variational quantum algorithm}
\acro{VQE}{variational quantum eigensolver}

\acro{XEB}{cross-entropy benchmarking}

\end{acronym}